\def\comic#1#2#3{\parbox{#1}{\centering\includegraphics[width=#1]{#2}\\{\footnotesize #3}}}
\def\comicII#1#2{\parbox{#1}{\centering\includegraphics[width=#1]{#2}}}
\newcommand{\kVR}{k\mbox{VR}}
\newcommand{\tVR}{2\mbox{VR}}
\newcommand{\zVR}{0\mbox{VR}}
\newtheorem{observation}{Observation}
\title{$k$-Transmitter Watchman Routes}
\author{Bengt J. {Nilsson}}{Department of Computer Science and Media Technology, Malm\"o University, Sweden \and \url{http://www.myhomepage.edu} }{bengt.nilsson.TS@mau.se}{https://orcid.org/0000-0002-1342-8618}{}
\author{Christiane Schmidt}{Department of Science and Technology, Link\"oping University, Sweden}{christiane.schmidt@liu.se}{https://orcid.org/0000-0003-2548-5756}{}
\authorrunning{B.\, J. Nilsson and C. Schmidt}%TODO mandatory. First: Use abbreviated first/middle names. Second (only in severe cases): Use first author plus 'et al.'
\keywords{Watchman Route, $k$-Transmitter, $k$-Transmitter Watchman Route,  NP-Hardness, Approximation Algorithm, NP-completeness} %TODO mandatory; please add comma-separated list of keywords
\title{$k$-Transmitter Watchman Routes\thanks{Supported by grants 2018-04001 (Nya paradigmer f\"{o}r autonom obemannad flygledning) and 2021-03810 (Illuminate: bevisbart goda algoritmer f\"{o}r bevakningsproblem) from the Swedish Research Council (Vetenskapsr\r{a}det).}}
\author{Bengt J. Nilsson\inst{1}\orcidID{0000-0002-1342-8618} \and
Christiane Schmidt\inst{2}\orcidID{0000-0003-2548-5756} }
\authorrunning{}
\institute{Department of Computer Science and Media Technology, Malm\"o University, Sweden, \texttt{ bengt.nilsson.TS@mau.se} \and
Department of Science and Technology, Link\"oping University, Sweden\\
  \texttt{christiane.schmidt@liu.se}}
\begin{document}
\maketitle              % typeset the header of the contribution
\begin{abstract}We consider the watchman route problem for a $k$-transmitter watchman: standing at point $p$ in a polygon $P$, the watchman can see $q\in P$ if $\overline{pq}$ intersects $P$'s boundary at most $k$ times---$q$ is $k$-visible to $p$. Traveling along the $k$-transmitter watchman route, either all points in $P$ or a discrete set of points $S\subset P$ must be $k$-visible to the watchman. We aim for minimizing the length of the $k$-transmitter watchman route.

We show that even in simple polygons the shortest $k$-transmitter watchman route problem for a discrete set of points $S\subset P$ is NP-complete and cannot be approximated to within a logarithmic factor %$(1-o(1))\cdot\ln n$ 
(unless P=NP), both with and without a given starting point. Moreover, we present a  polylogarithmic approximation for the $k$-transmitter watchman route problem for a given starting point and $S\subset P$ with approximation ratio $O(\log^2(|S|\cdot n) \log\log (|S|\cdot n) \log|S|)$ (with $|P|=n$).

\keywords{Watchman Route  \and $k$-Transmitter \and $k$-Transmitter Watchman Route \and NP-Hardness \and Approximation Algorithm \and NP-completeness}
\end{abstract}

\section{Introduction}\label{sec:intro}

In the classical {\it Watchman Route Problem} (WRP)---introduced by Chin and Ntafos~\cite{cn-owr-86}, we ask for the shortest (closed) route in an environment (usually a polygon $P$), such that a mobile guard traveling along this route sees all points of the environment. The WRP has mostly been studied for the ``traditional'' definition of visibility: a point $p\in P$ sees another point $q\in P$ if the line segment $\overline{pq}$ is fully contained in $P$. This mimics human vision, as this, e.g., does not allow looking through obstacles or around corners. 
In contrast to the classical guarding problems with stationary guards, the {\it Art Gallery Problem} (AGP)---where we aim to place a minimum number of non-moving guards that see the complete environment---the WRP is solvable in polynomial time in simple polygons with~\cite{cn-swrsp-91,thi-ciacs-99,defm-tsp-03} %, have studied the case of computing the shortest possible closed route for a watchman in simple polygons. 
%These algorithms require that the route passes a given boundary point. The WRP remains polynomiallly solvable in simple polygons 
and without~\cite{cjn-fswrsp-93,t-fcswr-01} a given boundary start point. In polygons with holes, the WRP is NP-hard~\cite{cn-owr-86,dt-wtph-12}.
%: Chin and Ntafos, and later Tan, Hirata, and Inagaki~\cite{cn-swrsp-91,thi-ciacs-99}, have studied the case of computing the shortest possible closed route for a watchman in simple polygons. Their algorithms require a given
%fixed boundary point, a {\em door}, that the route is forced to pass. The WRP remains polynomiallly solvable in simple polygons without this assumption as shown by Carlsson et al.~\cite{cjn-fswrsp-93}. In polygons with holes, the WRP is NP-hard~\cite{cn-owr-86}.

However, we may also have other vision types for the watchman. If we, for example, consider a mobile robot equipped with a laser scanner, then the scanning creates point clouds, which are easier to map afterwards when the robot was immobile while taking a single scan. This results in the model of ``discrete vision'': information on the environment can be acquired only at discrete points, at other times the watchman is blind. 
Carlsson et al.~\cite{cnn-ogcmw-93} showed that the problem of finding the minimum number of vision points---the discrete set of points at which the vision system is active---along a given path (e.g., the shortest watchman route) is NP-hard in simple polygons. Carlsson et al.~\cite{carlsson99computing} also presented an
efficient algorithm to solve the problem of placing vision points along a given watchman route in streets.
Another natural restriction for a mobile robot equipped with laser scanners is a limited visibility range (resolution degrades with increasing distance), see~\cite{bgs-euped-01,fms-mctc-10,s-amalc-11}.

Another type of visibility is motivated by modems: When we try to connect to a modem, we observe that one wall will not prevent this connection (i.e., obstacles are not always a problem). However, many walls separating our location from the modem result in a failed connection. This motivates studying so-called $k$-transmitters: $p\in P$ sees $q\in P$ if the line segment $\overline{pq}$ intersects $P$'s boundary at most $k$ times. If more than $k$ walls are intersected, we no longer ``see'' an object---the connection is not established. Different aspects of guarding with $k$-transmitters (the AGP with $k$-transmitters) have been studied. First, the focus was on worst-case bounds, so-called Art Gallery theorems. Aichholzer et al.~\cite{affhhuv-mimp-18} presented tight bounds on the number of $k$-transmitters in monotone and monotone orthogonal polygons. Other authors explored $k$-transmitter coverage of regions other than simple polygons, such as 
coverage of the plane in the presence of lines or line segment obstacles 
\cite{bbal-cktpo-10,mvu-mip-09}. Ballinger et al.~\cite{bbal-cktpo-10} also presented a tight bound for a very special class of polygons: spiral polygons, so-called {\it spirangles}. Moreover, for simple $n$-gons they provided a lower bound of $\lfloor n/6 \rfloor$ 2-transmitters. Cannon et al.~\cite{cfils-ccgpe-18} showed that it is NP-hard to compute a minimum cover of point 2-transmitters, point $k$-transmitters, and edge 2-transmitters (where a guard is considered to be the complete edge) in a simple polygon. The point 2-transmitter result extends to orthogonal polygons.
Moreover, they gave %necessity and sufficiency results 
upper and lower bounds for the number of edge 2-transmitters in general, monotone, orthogonal monotone, and orthogonal polygons; and improved the bound from \cite{bbal-cktpo-10} %Ballinger et al. 
for simple $n$-gons to $\lfloor n/5 \rfloor$ 2-transmitters. For the AGP with $k$-transmitters, no approximation algorithms have been obtained so far, but Biedl et al.~\cite{bclmm-goags-19} recently presented a first constant-factor approximation result for so-called sliding $k$-transmitters (traveling along an axis-parallel line segment $s$ in the polygon, covering all points $p$ of the polygon for which the perpendicular from $p$ onto $s$ intersects at most $k$ edges of the polygon).

Of course, $k$-transmitters do not have to be stationary (or restricted to travel along a special structure as in~\cite{bclmm-goags-19}): we might have to find a shortest tour such that a mobile $k$-transmitter traveling along this route can establish a connection with all (or a discrete subset of the) points of an environment, the {\em WRP with a $k$-transmitter}. This problem is the focus of this paper and to the best of our knowledge it has not been studied before. Given that our watchman moves inside the polygon, we consider even values for $k$ only---while odd numbers of $k$ can be interesting when we, for example, want to monitor the plane in presence of line-segment or line obstacles, or when we want to monitor parts of a polygon's exterior.

For the original WRP, we know that an optimal tour must visit all essential cuts: the non-dominated extensions of edges incident to a reflex vertex. However, already if we want to see a discrete set of points with a mobile $k$-transmitter, for $k\geq2$, we do not have such a structure: the region visible to a $k$-transmitter point can have $O(n)$ connected components, to see the point, the mobile $k$-transmitter can visit any of these. 

Guarding a discrete set of points---though with stationary guards---is, e.g., considered in the problem of guarding treasures in an art gallery: Deneen and Joshi~\cite{dj-tag-92} presented an approximation algorithm for finding the minimum number of guards that monitors a discrete set of treasure points, Carlsson and 
Jonsson~\cite{cj-gt-93} added weights to the treasures and aimed for placing a single guard maximizing the total value of the guarded treasures.

{\bf Roadmap.} In Section~\ref{sec:not}, we introduce notation; in Section~\ref{sec:sp}, we detail some special properties of $k$-transmitters. In Section~\ref{sec:np}, we show that the WRP with $k$-transmitters monitoring a discrete set of points is NP-complete and cannot be approximated to within a logarithmic factor in simple polygons even for $k=2$.  %This contrasts to the classical WRP that can be solved in polynomial time. 
In Section~\ref{sec:appx}, we present an approximation algorithm for the %case of the 
WRP with $k$-transmitters monitoring a discrete set of points and has a given starting point.

\section{Notation and Preliminaries}\label{sec:not}

We let $P$ be a polygon, in general, we are interested in $P$ being simple. We define $\partial(P)$ as the boundary of $P$, and let $n$ denote the number of vertices of $P$.

A point $q\in P$ is {\it $k$-visible} to a $k$-transmitter $p\in \mathbb{R}^2$ if $\overline{qp}$ intersects $P$'s boundary in at most $k$ connected components. This includes ``normal'' guards for $k=0$.  For a point $p\in P$, we define the {\it $k$-visibility region} of $p$, $\kVR(p)$, as the set of points in $P$ that are $k$-visible from $p$, see Figure~\ref{fig:boundary-is-not-enough}(a). For a set $X\subseteq P$: $\kVR(X)=\bigcup_{p\in X} \kVR(p)$. A $k$-visibility region can have $O(n)$ connected components (CCs), see~\cite{cfils-ccgpe-18} and Observation~\ref{obs:cc} in Section~\ref{sec:sp}, we denote these components by $\kVR^j(p), j=1,\ldots, J_p$, with $J_p\in O(n)$.

The boundary of each CC of $\kVR(p)$ contains edges that coincide with (parts of) edges of $\partial(P)$, and so-called {\it windows}. Cutting $P$ along a window $w$ partitions it into two subpolygons. We denote by $P_s(w)$ the subpolygon that contains a given point $s\not\in w$ and consider the window $w$ to belong to $\partial\big(P_s(w)\big)$. A window $w_1$ {\it dominates} another window $w_2$ if $P_s(w_2)\subset P_s(w_1)$. A window $w$ is {\it essential} if it is not dominated by any other window.

For a given point $s$ in a simple polygon $P$ there exists one window $w$ per CC $\kVR^j(p)$, such that any path from $s$ to a window $w'\neq w, w'\in \kVR^j(p)$ intersects $w$, that is, $P_s(w)\subseteq P_s(w')$. We denote this window as the {\it cut} of $\kVR^j(p)$ w.r.t. $s$, see Figure~\ref{fig:boundary-is-not-enough}(a). 

 %For a given point $s$ in a simple polygon $P$ there exists one window $w$ per CC $\kVR^j(p)$, such that any path from $s$ to a window $w'\neq w, w'\in \kVR^j(p)$ intersects $w$---that is, $P(w)\subseteq P(w')$. We denote this window as the {\it closest} window of $\kVR^j(p)$ to $s$. To shorten the description, we refer to the closest window of $\kVR^j(p)$ to $s$ as the {\it cut} of $\kVR^j(p)$. 

We aim to find shortest watchman routes for $k$-transmitters. In particular, we aim to find a route $R$, such that either all points of a polygon $P$ or a set of points $S\subset P$ is $k$-visible for the %$k$-transmitter 
watchman following $R$, that is, $\kVR(R)=P$ or $S\subset\kVR(R)$.
We define the {\it k-Transmitter WRP for $X\subseteq P$ and $P$}, $k$-TrWRP($X,P$), possibly with a given starting point $s\in P$, $k$-TrWRP($X,P,s$), as the problem of finding the shortest route for a $k$-transmitter within $P$, starting at $s$, from which every point in $X$ is $k$-visible. %We are interested in $k$-TrWRP($P,P$), $k$-TrWRP($S,P$), $k$-TrWRP($P,P,s$) and $k$-TrWRP($S,P,s$). 
%Let OPT($X,P$) and OPT($X,P,s$) be {\it optimal} w.r.t. $k$-TrWRP($X,P$) and $k$-TrWRP($X,P,s$), respectively. 
%In this paper, we aim to find shortest watchmen routes for $k$-transmitters. In particular, we aim to find such a route $R$, such that either all points of a polygon $P$ or a set of points $S\subset P$ is $k$-visible for the $k$-transmitter watchman following $R$, that is, $\kVR(R)=P$.
%We define the {\it k-Transmitter Watchman Route Problem for X and P}, $k$-TrWRP($X,P$), as the problem of finding the shortest route for a $k$-transmitter within $P$, from which every point in $X$ is $k$-visible. We are interested in $k$-TrWRP($P,P$) and $k$-TrWRP($S,P$).
%We define similar problems for a given starting point/{\it door} $s$: the {\it k-Transmitter Watchman Route Problem for X, P and s}, $k$-TrWRP($X,P,s$), as the problem of finding the shortest route for a $k$-transmitter within $P$ that starts and ends at $s$, from which every point in $X$ is $k$-visible. We are interested in $k$-TrWRP($P,P,s$) and $k$-TrWRP($S,P,s$). (The variant without a door is also refered to as a floating watchman route.)
Let OPT($S,P$) and OPT($S,P,s$) be {\it optimal} w.r.t. $k$-TrWRP($S,P$) and $k$-TrWRP($S,P,s$), respectively. 

In Section~\ref{sec:appx}, we use an approximation algorithm by Garg et al.~\cite{gkr-paags-00} for the group Steiner tree problem. The {\it group Steiner tree problem} was introduced by Reich and Widmayer~\cite{rw-bspvl-89}: given a graph $G=(V,E)$ with cost function $c:E\rightarrow \mathbb{R}^+$ and subsets of vertices $\gamma_1, \gamma_2, \ldots, \gamma_Q\subseteq V$\!, so-called {\it groups}, we aim to find the minimum-cost subtree $T$ of $G$ that contains at least one vertex from each of the groups, that is, a connected subgraph $T=(V',E'), V'\subseteq V, E'\subseteq E$ that minimizes $\sum_{e\in E'} c_e$ such that $V'\cap\gamma_q\neq\emptyset,\; \forall q\in\{1,\ldots,Q\}$. For $|V|=m$, %N=\max_{q} |\gamma_q|, N\leq n$, 
Garg et al.~\cite{gkr-paags-00} obtained a randomized algorithm with an approximation ratio of $O(\log^2 m \log\log m \log Q)$.

\section{Special Observations for $k$-Transmitters}\label{sec:sp}%\todo{leave out?}

\begin{figure}[t]
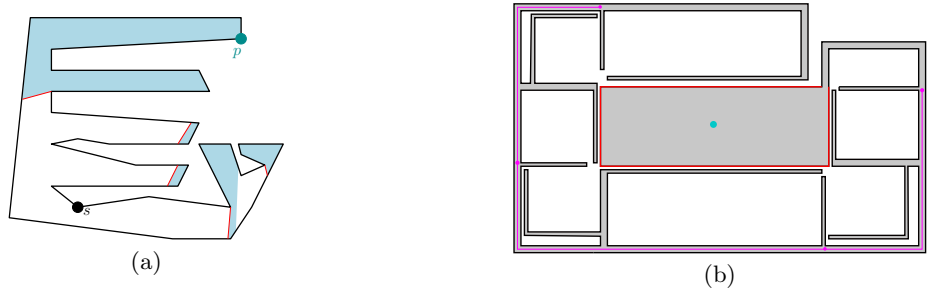

\centering
\comic{.3\textwidth}{figures/kVR}{(a)}\hfill
\comic{.45\textwidth}{figures/seeing-bd-is-not-enough}{(b)}
  \caption{\small (a): Point $p$ with its $2$-visibility region shown in light blue, $\tVR(p)$ has five CCs. The cuts of these CCs w.r.t. $s$ are shown in red. (b): The complete boundary of this polygon is visible from the pink $2$-transmitter watchman route. In particular, this holds for the red part of the polygon's boundary (seen, e.g., from the four marked pink points). However, the turquoise point is not $2$-visible from that route. Thus, not all of $P$ is $2$-visible from that route.}
  \label{fig:boundary-is-not-enough}
\end{figure}

%The first observation is that, w
%While f
For $0$-transmitter watchmen guarding a simple polygon's boundary is enough to guard all of the polygon, this does not hold for $k$-transmitters with $k\geq 2$:
\begin{observation}
For a simple polygon $P$ and $k\geq 2$: $\partial(P)$ being $k$-visible to a $k$-transmitter watchman route is not a sufficient condition for $P$ being $k$-visible to that $k$-transmitter watchman route, see Figure~\ref{fig:boundary-is-not-enough}(b).
\end{observation}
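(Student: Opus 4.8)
\emph{Approach.} Since the statement asserts that a condition is \emph{not} sufficient, it suffices to exhibit a single counterexample: a simple polygon $P$, the value $k=2$, and one concrete route $R\subseteq P$ for which $\partial(P)\subseteq\kVR(R)$ while $\kVR(R)\neq P$. The witness is the polygon together with the route indicated in Figure~\ref{fig:boundary-is-not-enough}; the point left uncovered is a single interior point $q$ with $q\notin\kVR(R)$. The whole proof then reduces to two claims about the figure: (i) every point of $\partial(P)$ is $2$-visible from some point of $R$, and (ii) $q$ is $2$-visible from no point of $R$. For (ii) I would use that $q\notin\kVR(R)$ iff, for every $p\in R$, the segment $\overline{pq}$ meets $\partial(P)$ in more than $k=2$ connected components. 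One parity remark sharpens this: since $p$ and $q$ are both interior to the simple polygon, $\overline{pq}$ crosses the Jordan curve $\partial(P)$ an even number of times (we may take all relevant segments in general position), so ``not $2$-visible'' is equivalent to crossing $\partial(P)$ at least four times, i.e.\ to passing through at least two separating \emph{layers} of boundary. This even-layer obstruction is exactly the mechanism absent for $0$-transmitters, where seeing $\partial(P)$ does force seeing all of $P$.

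\emph{Construction and verification.} I would realize $P$ as a spiral/pinwheel of thin inward notches (equivalently, nested pockets) wound around the central point $q$, with the openings of successive notches placed at \emph{angularly offset} positions, and choose $R$ to thread the resulting corridors through these offset openings while never entering the innermost cell that contains $q$. For claim~(i), each boundary face is charged to a nearby position of $R$: outer faces are seen directly, and the faces bounding inner notches are seen \emph{grazing}, through an offset opening, so that each is reached within two crossings. For claim~(ii), the angular offsets are chosen so that no straight segment aimed at $q$ can thread more than one opening; hence every such segment is forced to pierce at least two notch layers, giving at least four crossings and $q\notin\kVR(R)$. Combining (i) and (ii) yields $\partial(P)\subseteq\kVR(R)$ yet $q\notin\kVR(R)$, so $\kVR(R)\neq P$, which is the asserted non-implication.

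\emph{Main obstacle.} The delicate point is that the notch faces bounding the central cell are geometrically about as \emph{deep} as $q$ itself, so a careless viewpoint that $2$-sees such an inner face will also $2$-see $q$; a naive nested enclosure therefore fails, because each innermost $q$-facing face is co-deep with $q$ and drags $R$ into $\kVR(q)$. The heart of the argument is to choose the offsets so that every inner face is captured by a grazing line through a gap whose continuation \emph{misses} $q$ (a neighbouring wall occludes the centre along that exact direction), while $q$ stays behind two layers from \emph{every} point of $R$. I expect this reconciliation of (i) and (ii) to be the crux, and I would finish by checking, on the explicit coordinates of Figure~\ref{fig:boundary-is-not-enough}, that $R$ is a single connected route, that the charging in (i) stays within two crossings for all boundary faces, and that the offset condition in (ii) holds throughout the region accessible to $R$, not merely for a few sample directions.
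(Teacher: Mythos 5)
Your proposal is correct and takes essentially the same approach as the paper, whose entire ``proof'' is the counterexample of Figure~\ref{fig:boundary-is-not-enough}: a polygon with nested/spiral pockets, a route seeing all of $\partial(P)$ (the inner faces grazed through offset openings), and one central point that every segment from the route reaches only after at least four boundary crossings. Your parity remark and your identification of the crux---that inner faces must be seen by grazing lines whose continuations miss the central point---make explicit exactly the verification the paper leaves implicit in the figure.
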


\iffalse
\begin{figure}
\centering
\comicII{\textwidth}{figures/seeing-bd-is-not-enough}
  \caption{\small The complete boundary of this polygon is visible from the pink $2$-transmitter watchman route. In particular, this holds for the red part of the polygon's boundary (seen, e.g., from the four marked pink points). However, the turquoise point is not $2$-visible from that route. Thus, not all of $P$ is $2$-visible from that route.}
  \label{fig:boundary-is-not-enough}
\end{figure}
\fi

%For $0$-transmitters in simple polygons, t
The visibility region $\zVR(p)$ for any point $p\in P$ has a single connected component (and is also a simple polygon). This does not hold for larger $k$, as already for $k=2$ we have:
\begin{observation}[Observation 1 in~\cite{cfils-ccgpe-18}]\label{obs:cc}
In a simple polygon $P$, the $2$-visibility region of a single guard
can have $O(n)$ connected components.  [More precise: The $2$-visibility region of a single guard can have at most $n$ connected components.]
\end{observation}

\section{Computational Complexity}\label{sec:np}

\begin{theorem}\label{np-disc}
For a discrete set of points $S\subset P$ and a simple polygon $P$, the $k$-Transmitter WRP for $S$ and $P$,
$k$-TrWRP($S,P$), does not admit a polynomial-time approximation algorithm with approximation ratio $\alpha\cdot\ln |S|$ for a constant $\alpha>0$ unless P=NP, even for $k=2$. 
\end{theorem}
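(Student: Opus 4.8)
The plan is to prove inapproximability via a gap-preserving reduction from \textsc{Set Cover} (or its equivalent, \textsc{Dominating Set}), mirroring the classical technique used to show logarithmic inapproximability for covering-type problems. The key is to encode an arbitrary \textsc{Set Cover} instance geometrically so that (i) covering a set element corresponds to making a specific point of $S$ be $2$-visible, and (ii) selecting a set corresponds to the watchman route passing through a specific region. I would use the known hardness of approximating \textsc{Set Cover} to within $(1-\varepsilon)\ln|S|$ (Dinur--Steurer / Feige), so that a contradiction with a $c\cdot\ln|S|$ approximation for $k$-TrWRP follows.

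\smallskip

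\noindent\textbf{Construction.} Given a \textsc{Set Cover} instance with universe $U=\{u_1,\dots,u_N\}$ and sets $T_1,\dots,T_M\subseteq U$, I would build a simple polygon $P$ in which there are $M$ ``set gadgets'' (geometric locations the route may choose to visit) and $N$ ``element points'' placed into $S$. The heart of the construction is to exploit the $k$-transmitter semantics for $k=2$: I would arrange thin walls/slits so that an element point $u_i$ becomes $2$-visible from a set gadget $T_j$ \emph{exactly when} $u_i\in T_j$, because the line of sight from that gadget to $u_i$ crosses precisely two boundary pieces (two slit edges), whereas a gadget $T_j$ with $u_i\notin T_j$ has its sightline blocked by three or more boundary crossings. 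A standard trick is to align the $M$ set gadgets along a short ``corridor'' so that each gadget is cheap to reach but visiting it has a fixed travel cost, making total route length proportional to the number of selected sets. The element points are tucked behind double-slit windows so that each is seen from exactly the gadgets corresponding to the sets containing it.

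\smallskip

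\noindent\textbf{Gap preservation.} I would then argue the two directions. In the forward direction, any set cover of size $g$ yields a route of length $O(g)$ (plus a fixed additive constant absorbed into the gadget geometry) that visits the $g$ corresponding gadgets and hence $2$-sees all of $S$. In the reverse direction, any route that $2$-sees all of $S$ must, for each element point $u_i$, pass through (or near) at least one gadget whose set contains $u_i$ --- here I would need the geometry to force that the \emph{only} places from which $u_i$ is $2$-visible lie in the designated gadget regions, so a route of length $\ell$ translates back into a set cover of size $O(\ell)$. By scaling gadget spacing I can make the multiplicative distortion between ``number of chosen sets'' and ``route length'' an arbitrarily small factor, so a $c\cdot\ln|S|$-approximation for the route would yield a $(1+o(1))c\cdot\ln N$-approximation for \textsc{Set Cover}, contradicting its known inapproximability.

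\smallskip

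\noindent\textbf{Main obstacle.} The hardest part is the reverse direction's geometric rigidity: I must design the double-slit windows so that the $2$-visibility region of each element point $u_i$ is confined to the union of exactly the intended gadget regions, with \emph{no} unintended vantage points elsewhere in $P$ from which $u_i$ would be $2$-visible. Because a $2$-visibility region can have $\Theta(n)$ connected components (Observation~\ref{obs:cc}), I must verify that stray components do not let the route cheat by seeing $u_i$ from an unplanned location, and that the route cannot amortize visiting one gadget to cover elements of several sets it does not belong to. Controlling these sightlines --- counting boundary crossings exactly, and ruling out degenerate ``grazing'' intersections at slit endpoints --- is the delicate technical core; everything else (the length/cardinality correspondence and the reduction bookkeeping) is routine once the visibility combinatorics are pinned down.
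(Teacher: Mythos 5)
Your overall strategy---a gap-preserving reduction from \textsc{Set Cover} in which set gadgets are locations the route may visit, element points of $S$ sit behind slits, and the slit geometry encodes the incidence relation so that a gadget $2$-sees exactly the elements of its set---is essentially the paper's approach: there, the set gadgets are long upward spikes, and the elements sit in a horizontal box with T-shaped crossbeams whose gaps realize exactly the edges of the element--set incidence graph. The forward direction and the route-length-versus-cover-size bookkeeping you sketch also match the paper.

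However, there is a genuine gap, located exactly where you place your ``main obstacle,'' and your proposed way of resolving it cannot work. You require that the $2$-visibility region of each element point $u_i$ be confined to the union of the intended gadget regions, with \emph{no} unintended vantage points elsewhere in $P$. That is impossible: every $u_i$ is $0$-visible (hence $2$-visible) from a neighborhood of itself, and in any slit construction it is also $2$-visible from a substantial region on its own side of the slits (in the paper's construction this is the whole horizontal box, the T-structures, and the bottom of the connecting corridor). Since the theorem concerns the floating version with no prescribed starting point, nothing in your construction stops the route from simply touring that region and seeing all element points directly, at a cost unrelated to the set-cover optimum---the reduction then fails. The paper's fix is not to eliminate these vantage points but to make them unaffordable: it adds an anchor point $v\in S$ at the center of a spiral, so that any feasible route must come close to the set gadgets in order to $2$-see $v$, and it separates the anchor/gadget region from the element region by a very long vertical corridor of length $4\cdot|\mathcal{C}|\cdot\ell_{\mathcal{C}}+\varepsilon$, so that descending it (and returning) costs more than visiting \emph{all} spike tips. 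Consequently any route within a logarithmic factor of optimal must see the elements through the slits from the spike tips, and the cover can be read off from the tips visited (or, if the route does descend, one extracts the trivial cover $\mathcal{C}$, whose size is still within the claimed factor of the optimum). Some anchor-plus-distance mechanism of this kind is indispensable to your reverse direction; the slit geometry alone cannot enforce the rigidity you assume. (A minor further omission: to get the ``unless P $=$ NP'' statement as an NP-completeness result one also notes, as the paper does, that feasibility of a given route is checkable by intersecting it with the $2$-visibility regions of the points of $S$.)
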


\begin{proof}
We give a gap-preserving reduction from Set Cover (SC):

\noindent{\bf Set Cover (SC):}\\
{\bf Input:} A set system $(\mathcal{U},\mathcal{C})$, %(i.e., a universe $\mathcal{U}$ and a collection of sets $\mathcal{C}$), 
with $\cup_{C\in\mathcal{C}} C = \mathcal{U}$.\\
{\bf Output:} Minimum cardinality sub-family $\mathcal{B}\subseteq\mathcal{C}$ that covers $\mathcal{U}$, i.e., $\cup_{B\in\mathcal{B}} B = \mathcal{U}$.

Given an instance of the Set Cover problem, we construct a polygon $P$ with $S=\mathcal{U}\cup \{v\}$. For that construction, we build a bipartite graph $G$ with vertex set $V(G)=\mathcal{U}\cup\mathcal{C}$ and edge set $E(G)=\{e=(u,c)\mid u\in\mathcal{U}, c\in\mathcal{C}, u \in c\}$. See Figure~\ref{fig:sc-red}(a) for an example of this graph~$G$. 

We start the construction of $P$, see Figure~\ref{fig:sc-red}(b), with a spiral structure with $v\in S$ located in the center of the spiral, to its end we attach $|\mathcal{C}|$ spikes (narrow polygonal corridors of four vertices each),  each ending at the same $y$-coordinate (each $C\in\mathcal{C}$ corresponds to the tip of one spike). Let the length of the longest spike be $\ell_\mathcal{C}$, and let the length of the spikes differ by $\varepsilon'\ll\ell_\mathcal{C}$ only.
All points $u\in\mathcal{U}$ are located in a long horizontal box to which T-shaped structures are attached, such that the crossbeams leave gaps only where an edge in $E(G)$ connects a $C$ from a spike to a $u$ in the horizontal box. These two polygon parts are connected by a very long vertical polygonal corridor. Let the length of this corridor be $\ell_{\mbox{vert}} = 4\cdot |\mathcal{C}| \cdot\ell_\mathcal{C}$. %+ \varepsilon$.

\begin{figure}[h]
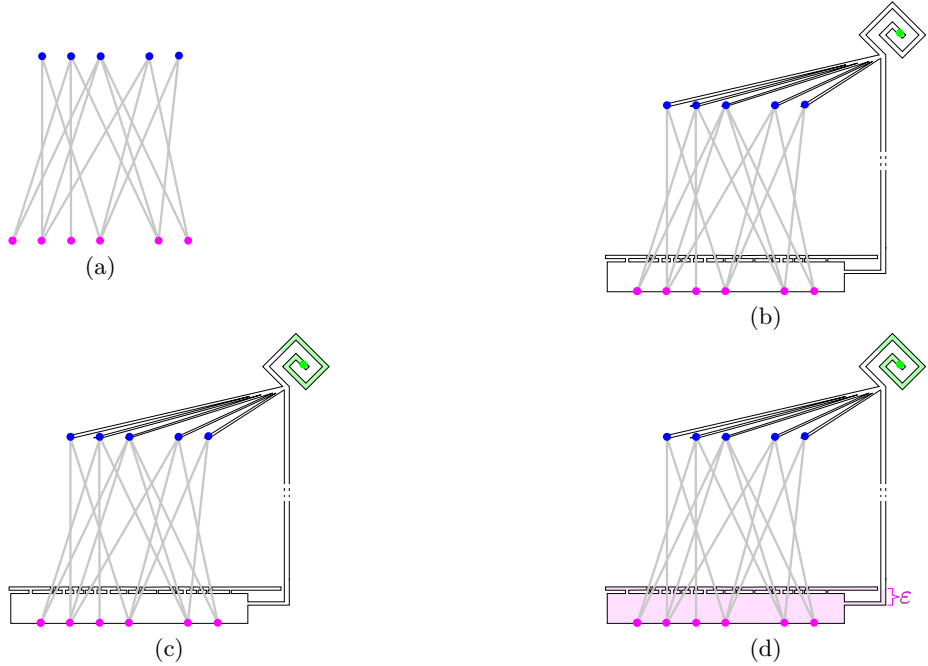

\centering
\comic{.2\textwidth}{figures/sc-ktr-red-1}{(a)}\hfill
\comic{.35\textwidth}{figures/sc-ktr-red-2}{(b)}\\
\comic{.35\textwidth}{figures/sc-ktr-red-3}{(c)}\hfill
\comic{.35\textwidth}{figures/sc-ktr-red-4}{(d)}
  \caption{\small Example construction for the SC instance $(\textcolor{magenta}{\mathcal{U}},\textcolor{blue}{\mathcal{C}})$ with \textcolor{magenta}{$\mathcal{U}=\{1,2,3,4,5,6\}$}, \textcolor{blue}{$\mathcal{C}=\{\{2,4\},\{1,3,5\},\{1,2,5,6\},\{2,4,6\},\{4,5\}\}$}. (a) Graph $G$, (b) polygon $P$ with \textcolor{green}{$v$} shown in green, (c) $\kVR(v)$ shown in light green, (d) set of points that are not located in the $|\mathcal{C}|$ spikes and see all $u\in\mathcal{U}$ (all points in the horizontal box, in the T-shaped structures,  and points at the bottom of the long vertical corridor).}
  \label{fig:sc-red}
\end{figure}

\iffalse
\begin{figure}
\centering
\comic{.28\textwidth}{figures/sc-ktr-red-1}{(a)}\hfill
\comic{.45\textwidth}{figures/sc-ktr-red-2}{(b)}\\
\comic{.45\textwidth}{figures/sc-ktr-red-3}{(c)}\hfill
\comic{.45\textwidth}{figures/sc-ktr-red-4}{(d)}
  \caption{\small Example construction for the SC instance $(\textcolor{magenta}{\mathcal{U}},\textcolor{blue}{\mathcal{C}})$ with \textcolor{magenta}{$\mathcal{U}=\{1,2,3,4,5,6\}$}, \textcolor{blue}{$\mathcal{C}=\{\{2,4\},\{1,3,5\},\{1,2,5,6\},\{2,4,6\},\{4,5\}\}$}. (a) Graph $G$, (b) polygon $P$ with \textcolor{green}{$v$} shown in green, (c) $\kVR(v)$ shown in light green, (d) set of points that are not located in the $|\mathcal{C}|$ spikes and see all $u\in\mathcal{U}$ (all points in the horizontal box, in the T-shaped structures,  and points at the bottom of the long vertical corridor).}
  \label{fig:sc-red}
\end{figure}
\fi

Because of the placement of $v$, any 2-transmitter needs to enter the spiral structure to reach a point  in $\tVR(v)$ (indicated in light green in Figure~\ref{fig:sc-red}(c)). All $u\in\mathcal{U}$ are visible only to points in the horizontal box, in the T-shaped structures, to points at the bottom of the long vertical corridor (shown in light pink in Figure~\ref{fig:sc-red}(d)) and from the tips of the spikes (representing the $C\in\mathcal{C}$). Covering the vertical corridor twice to reach any of the light pink points from $\tVR(v)$ is more expensive than even visiting all tips of the spikes. Hence, any optimal $k$-transmitter watchman route must visit spike tips to see all $u\in\mathcal{U}$. To obtain the shortest $k$-transmitter watchman route we must visit as few spike tips as possible: we must visit the minimum number of spike tips, such that all pink points $u\in\mathcal{U}$ are covered. This is exactly the solution to the Set Cover problem.

Set Cover cannot be approximated in polynomial time to within a factor $(1-o(1))\ln |\mathcal{U}|$, where $|S|=|\mathcal{U}|+1$;~\cite{f-tasc-98}.

For each $s_i\in S$, we can compute the $2$-visibility region of $s_i$ and check whether the given route intersects it, thus, $k$-TrWRP($S,P$) is in~NP.
\end{proof}

%\todo{For NP-completeness: compute $\tVR(s)$ for all $s\in S$, if these all intersect a given route, then feasible solution. To show: how to efficiently compute $\tVR(s)$}

By choosing $s$ to be located on the window of $\tVR(v)$ in the above construction and using $S=\mathcal{U}$, we obtain:
\begin{corollary}
For $S$, $P$, and $\alpha$ as in Theorem~\ref{np-disc}, $s\in P$,
{\rm $k$-TrWRP($S,P,s$)} does not admit a %polytime 
polynomial-time 
approximation algorithm with approximation ratio $\alpha\cdot\ln |S|$, %unless P=NP 
for $k\geq2$.
\end{corollary}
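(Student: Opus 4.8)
The plan is to recycle the polygon $P$ and the gap-preserving reduction from Set Cover used in the proof of Theorem~\ref{np-disc}, changing only the point set and introducing the prescribed start point. Concretely, I would keep the same construction of Figure~\ref{fig:sc-red}(b)---the spiral holding $v$, the $|\mathcal{C}|$ spikes, the horizontal box with the T-shaped crossbeams encoding $E(G)$, and the long vertical corridor of length $\ell_{\mbox{vert}}=4|\mathcal{C}|\ell_\mathcal{C}+\varepsilon$---but set $S=\mathcal{U}$ (dropping $v$) and place the start point $s$ on the window of $2VR(v)$. Intuitively, in the proof of Theorem~\ref{np-disc} the sole role of $v$ was to force the route into the spiral, up to the window of $2VR(v)$; fixing $s$ on that window makes this visit free of charge and removes the need to keep $v$ in $S$.

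With $s$ pinned to the window of $2VR(v)$, I would argue that the structure of optimal routes is unchanged. First, $s$ lies deep inside the spiral, separated from every $u\in\mathcal{U}$ by far more than two walls, so no point of $\mathcal{U}$ is already covered at the start and the route must still travel toward the $u$-region. Second, the only places from which the points of $\mathcal{U}$ are $2$-visible remain the horizontal box, the T-structures, the bottom of the vertical corridor, and the spike tips. Since reaching the box or the corridor bottom from $s$ costs at least $2\ell_{\mbox{vert}}=8|\mathcal{C}|\ell_\mathcal{C}+2\varepsilon$ (descending and returning), which exceeds the cost $\approx 2|\mathcal{C}|\ell_\mathcal{C}$ of visiting even all spike tips, any optimal route covers $\mathcal{U}$ exclusively through spike tips.

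Consequently, because the spikes differ in length by at most $\varepsilon'\ll\ell_\mathcal{C}$, the length of an optimal route is a strictly increasing function of the number of spike tips it visits, and a minimum-length route visits exactly the spikes of a minimum set cover. This is the same correspondence as in Theorem~\ref{np-disc}, so the reduction is again gap preserving; since now $|S|=|\mathcal{U}|$, Feige's bound~\cite{f-tasc-98} rules out a polynomial-time $c\cdot\ln|S|$ approximation unless P$=$NP.

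The step I expect to require the most care is verifying that fixing $s$ exactly on the window of $2VR(v)$ does not inadvertently grant the watchman $2$-visibility of any $u\in\mathcal{U}$ or open up a cheaper detour, and that the closed-route convention with a prescribed start keeps the length monotone in the number of visited spikes. This is a matter of checking the geometry of the spiral--corridor junction and the role of the spike separation $\varepsilon'$, but it is the crux that lets the start-point variant inherit the inapproximability of the free variant.
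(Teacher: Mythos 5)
Your proposal is correct and takes essentially the same approach as the paper, which obtains the corollary in exactly this way: placing $s$ on the window of $2VR(v)$ in the construction of Theorem~\ref{np-disc} and setting $S=\mathcal{U}$. The cost comparison and gap-preservation details you spell out are precisely the checks the paper leaves implicit when it reuses the reduction.
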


We can generalize the construction by replacing the elements of $U$ in Fig.~\ref{fig:sc-red} with small almost horizontal spikes that need to be covered by the tour visiting the minimum number of set spikes at the top of the figures. We claim: %the next corollary.
\begin{corollary}
For a simple polygon $P$ and $s\in P$,
{\rm $k$-TrWRP($P,P,s$)} does not admit a %polytime 
polynomial-time 
approximation algorithm with approximation ratio $\alpha\cdot\ln n$ for a constant $\alpha>0$, %unless P=NP 
for $k\geq4$.
\end{corollary}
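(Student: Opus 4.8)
The plan is to reuse the gap-preserving Set Cover reduction of Theorem~\ref{np-disc} almost verbatim, changing only how the universe elements are realized inside $P$. Instead of placing each $u\in\mathcal{U}$ as a single point in the horizontal box, I would replace it by a small, thin, almost horizontal spike $\sigma_u$ protruding from the box, so that \emph{full} coverage of $P$ forces the route to $k$-see every interior point of $\sigma_u$ rather than one point. The spiral around $v$, the $|\mathcal{C}|$ set-spikes carrying the sets, the T-shaped crossbeams encoding the bipartite graph $G$, and the long vertical corridor of length $\ell_{\mbox{vert}}=4\cdot|\mathcal{C}|\cdot\ell_\mathcal{C}+\varepsilon$ all stay in place, and I keep the start point $s$ on the window of $2VR(v)$ exactly as in the preceding corollary.

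The core of the argument is a crossing count. In the original construction a sight line from a set-spike tip down through the matching crossbeam gap to the point $u$ met $\partial(P)$ twice, which is why $k=2$ sufficed there. When $u$ is replaced by the thin spike $\sigma_u$, its deepest interior points (the reflex back-pocket corners where $\sigma_u$ attaches to the box) can no longer be reached without, in addition, piercing both nearly-parallel walls of $\sigma_u$ itself. Thus these points become $k$-visible from a set-spike tip precisely when $k\ge 4$, and I would choose the spike geometry so that the count is tight: at most four crossings along the admissible sight lines, and strictly more than $k$ crossings along any line that would let a set-spike tip see a spike $\sigma_u$ with $(u,C)\notin E(G)$ (padding the crossbeams with extra blocking walls if needed so this remains true for every fixed $k>4$). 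This keeps the incidence structure of $G$ intact for all $k\ge 4$.

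With visibility calibrated in this way, the cost analysis of Theorem~\ref{np-disc} transfers unchanged. Every $\sigma_u$ is still $k$-visible from the interior of the box, from the T-structures, and from the bottom of the corridor, but reaching any of these regions from $2VR(v)$ forces a double traversal of the corridor, and $\ell_{\mbox{vert}}$ is chosen to exceed the total length of a tour that merely visits all $|\mathcal{C}|$ set-spike tips; hence an optimal route covers the universe spikes only from set-spike tips. Consequently the shortest route that $k$-sees all of $P$ from $s$ visits a minimum collection of set-spike tips whose spikes jointly cover all $\sigma_u$, i.e.\ an optimum Set Cover. Since the reduction is polynomial and $|\mathcal{U}|=\Theta(\mathrm{poly}(n))$ gives $\ln|\mathcal{U}|=\Theta(\ln n)$, Feige's $(1-o(1))\ln|\mathcal{U}|$ hardness~\cite{f-tasc-98} yields the claimed $c\cdot\ln n$ inapproximability in terms of the polygon size $n$.

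I expect the main obstacle to be the geometric realization that makes the crossing count \emph{exactly} four along the intended sight lines while simultaneously (i) keeping each $\sigma_u$ invisible from the cheap spiral side, (ii) ruling out that the two back pockets of a spike are covered ``for free'' from several set-spike tips together, or from the corridor walls by short detours, and (iii) guaranteeing that the gap is genuinely preserved, so that a near-optimal $k$-transmitter route maps back to a near-optimal Set Cover. Verifying that no unintended sight line lowers the crossing count below four (which would reopen the cheap coverage available for $k=2,3$) or raises it for a wrong incidence is the delicate part; the remaining length analysis and the correctness of the gap follow Theorem~\ref{np-disc}.
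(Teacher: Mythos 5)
Your proposal takes essentially the same approach as the paper: the paper's own justification of this corollary is only a one-sentence claim that the construction of Theorem~\ref{np-disc} generalizes by replacing the elements of $\mathcal{U}$ with small almost horizontal spikes that must then be covered by a tour visiting the minimum number of set-spike tips, which is exactly your construction. Your crossing-count argument for why the threshold becomes $k\geq 4$, and your list of geometric details that still need verification, in fact go beyond what the paper itself records.
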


%\todo{add something on $k$-TrWRP($P,P,s$) and $k$-TrWRP($P,P$)}
\section{Approximation Algorithm for $k$-TrWRP($S,P,s$)}\label{sec:appx}

In this section, we develop an approximation algorithm ALG($S,P,s$) for the $k$-transmitter watchman route problem for a simple polygon $P$, a discrete set $S$ of points in $P$, and a given starting point $s$. We prove: %the following theorem.
\begin{theorem}\label{th:appx-dis}
Let $P$ be a simple polygon, $n=|P|$. Let {\rm OPT($S,P,s$)} be the optimal solution for the {\rm $k$-TrWRP($S,P,s$)} and let $R$ be the solution output by our algorithm {\rm ALG($S,P,s$)}. %Then:
%$|R| \leq 8 \cdot polylog(|S|, (n\cdot |S|)) |\mbox{OPT}(S,P,s)|$.
%That is, we have a polylogarithmic approximation for $k$-TrWRP($S,P,s$). 
Then $R$ has length within %yields an approximation ratio of~%$O\big(\mbox{\rm polylog}\, (|S|\cdot n)\big)$.
$O\big(\log^2 (|S|\cdot n) \log\log (|S|\cdot n) \log|S|\big)$ of OPT($S,P,s$). 
\end{theorem}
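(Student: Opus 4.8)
The plan is to reduce $k$-TrWRP($S,P,s$) to the group Steiner tree problem and invoke the algorithm of Garg et al.~\cite{gkr-paags-00}. The starting observation is the cut characterization from Section~\ref{sec:not}: for each $s_i\in S$ we compute $\kVR(s_i)$, split it into its $J_{s_i}\in O(n)$ connected components (Observation~\ref{obs:cc}), and for each component $\kVR^j(s_i)$ determine its cut $w_i^j$---the unique window through which every $s$-rooted path entering that component must pass. A route $R$ starting at $s$ therefore sees $s_i$ if and only if $R$ crosses at least one cut in $W_i:=\{w_i^1,\dots,w_i^{J_{s_i}}\}$. Thus ``seeing $S$'' becomes ``crossing, for every $i$, at least one segment of $W_i$,'' a group-constrained stabbing condition, and the total number of cuts is $O(|S|\cdot n)$.

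First I would build a weighted graph $G=(V,E)$ on which to run the approximation. The vertex set $V$ consists of $s$ together with a polynomial-size set of sample points on the cuts: the endpoints of every cut, the pairwise intersection points of cuts, the reflex vertices of $P$, and the points where the geodesic shortest path between any two of these meets a cut. Edge weights are geodesic distances inside $P$, computable in polynomial time. For each $i$ I define the group $\gamma_i$ as the set of sample vertices lying on a cut of $W_i$, and add the singleton root group $\{s\}$; this yields $Q=|S|+1$ groups and $m=|V|=\mathrm{poly}(|S|\cdot n)$ vertices, so $\log m = O(\log(|S|\cdot n))$.

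Next I would connect tours and trees in the two directions needed for an approximation guarantee. In the easy direction, given a group Steiner tree $T$ of $G$ that connects $s$ to at least one vertex of every $\gamma_i$, doubling its edges yields a closed route $R$ through $s$ of length at most $2\,\mathrm{cost}(T)$; since $T$ reaches a point on a cut of each $W_i$, the route $R$ crosses that cut and hence $S\subseteq\kVR(R)$. Applying the randomized algorithm of Garg et al.~\cite{gkr-paags-00} gives such a $T$ with $\mathrm{cost}(T)=O(\log^2 m\,\log\log m\,\log Q)\cdot \mathrm{OPT}_{\mathrm{GST}}$, and substituting the bounds on $m$ and $Q$ produces exactly the factor in the statement, up to the constant from doubling.

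The crux is the reverse inequality $\mathrm{OPT}_{\mathrm{GST}}\le c\cdot |\mathrm{OPT}(S,P,s)|$, i.e.\ showing that the discretization is faithful. I would take an optimal route, mark for each $i$ a point $x_i$ where it crosses a chosen cut of $W_i$, and exhibit a group Steiner tree of comparable cost built from the portions of the route between consecutive crossing points. Between two such points the route can be shortcut to the geodesic, which bends only at reflex vertices and meets each intervening cut in a single point---all of which are vertices of $G$ by construction---so the marked crossings can be snapped onto $V$. Turning this snapping into a clean constant independent of the polygon's geometry is the main obstacle: a naive bound pays the full length of a cut per snapped crossing, so the argument must instead route the tree through the geodesic-intersection vertices and charge each replacement against a distinct sub-geodesic of the optimal route. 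Establishing this charging, and verifying that the closure of the candidate-vertex construction stays polynomial, is where the real work lies; everything else is assembly of the cut characterization, the geodesic graph, and the black-box group Steiner approximation.
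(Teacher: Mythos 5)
Your high-level architecture coincides with the paper's: discretize via the cuts of the connected components of the $\kVR(s_i)$, build a group Steiner instance with one group per $s_i$ plus a root group $\{s\}$, invoke Garg et al.~\cite{gkr-paags-00}, and double the tree. The easy direction (tree $\rightarrow$ feasible route of at most twice the cost, ratio $O(\log^2(|S|\cdot n)\log\log(|S|\cdot n)\log(|S|+1))$) is also handled the same way. But the proof is incomplete exactly where you say ``the real work lies'': you never establish the reverse inequality $\mathrm{OPT}_{\mathrm{GST}}\le c\cdot\|\mbox{OPT}(S,P,s)\|$, and that inequality is the entire technical content of the theorem. Your discretization (cut endpoints, cut--cut intersections, reflex vertices, crossings of cuts with geodesics between pairs of these) does not obviously support it: the optimum's crossing points are not sample vertices, so the geodesics whose cut-crossings you put into $V$ are between the wrong pairs of points, and snapping OPT's crossings onto $V$ has no bounded cost---as you note, a naive bound pays a full cut length per crossing. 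A second, related defect is that your graph has no mechanism by which a single visited vertex certifies coverage of \emph{other} groups whose cuts are merely crossed by the connecting geodesics; in the graph model, crossing an edge of the polygon-geodesic through a cut is invisible, so a geometrically feasible snapped tour need not be feasible as a group Steiner solution.

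The paper resolves precisely these two issues with a different discretization and an explicit charging argument. Its vertices are the points $p_{i,j}$ where the geodesic $g_{i,j}$ from $s$ to cut $c_{i,j}$ meets that cut, together with zero-cost dummy vertices $\hat{c}_{i,j}$ attached to every $p_{i',j'}$ whose geodesic crosses $c_{i,j}$---this is what makes ``crossing'' count as coverage in the graph. The faithfulness bound then proceeds: restrict to the essential cuts visited by OPT, filter their geodesics by decreasing length into an independent family (Lemma~\ref{le:indep}); form the relative convex hull of OPT's visit points $o_{i,j}$ together with the surviving points $p_{i,j}$; show no surviving geodesic crosses a cut between $o_{i,j}$ and $p_{i,j}$ and that between consecutive $o$-points the hull carries at most two $p$-points (Lemmas~\ref{le:intersecting-ch} and~\ref{le:max2}); and charge each detour from OPT to a $p_{i,j}$, via the point $\rho_{i,j}$ where $g_{i,j}$ crosses OPT, against the stretch of OPT between consecutive $o$-points, using that $g_{i,j}$ is a shortest path to the cut, so $\ell(\rho_{i,j},p_{i,j})\le\ell(\rho_{i,j},o_{i,j})$. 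This yields $\|\mbox{CH}_{P}(\mbox{OPT},\mathcal{P}_{\mathcal{C''}})\|\le 3\,\|\mbox{OPT}(S,P,s)\|$ (Lemma~\ref{le:ch-vs-opt}), plus one extra $\|\mbox{OPT}(S,P,s)\|$ to attach $s$, while Lemmas~\ref{le:compare-chs}--\ref{le:visit-gammais} verify that the hull of the $p_{i,j}$ alone still covers one graph vertex per group (here the zero-cost $\hat{c}_{i,j}$ vertices are essential, since colors whose cuts were filtered out are covered only by crossings). Without an argument of this kind---a concrete choice of snap targets whose detour cost is charged against distinct pieces of OPT, and a graph gadget turning geodesic crossings into group coverage---your reduction proves only the trivial direction, and the claimed approximation ratio remains unestablished.
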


\begin{figure}
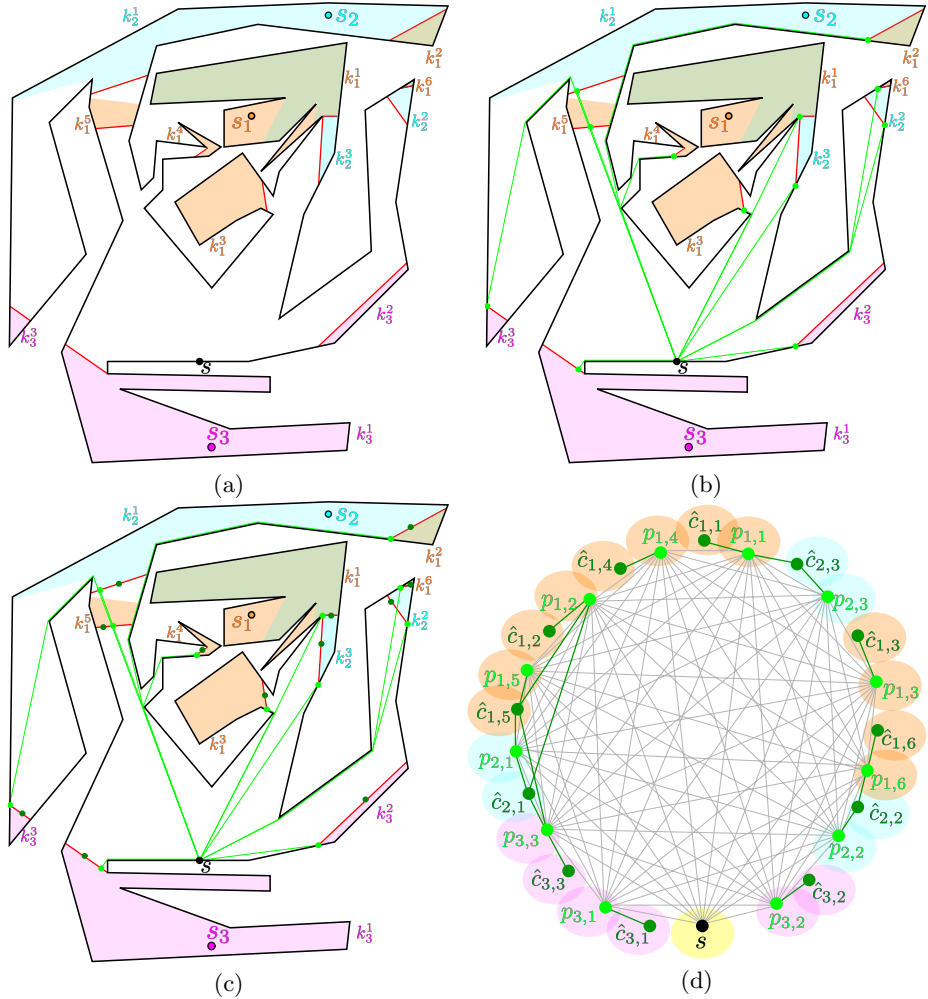
   
\centering
\comic{.48\textwidth}{figures/appx-idea-1}{(a)}\hfill
\comic{.48\textwidth}{figures/appx-idea-2}{(b)}\\
\comic{.48\textwidth}{figures/appx-idea-4}{(c)}\hfill
\comic{.5\textwidth}{figures/graph-G-new}{(d)}
  \caption{\small Example for the idea of our approximation algorithm, $S=\{s_1, s_2, s_3\}$. (a)-(c): The cuts $c_{i,j}$ are shown in red, geodesics and all $p_{i,j}$ are shown in light green, all $\hat{c}_{i,j}$ are shown in dark green. The CCs of the visibility region of a point $s_i$ are colored in a lighter shade of the same color as the point itself (orange for $s_1$, turquoise for $s_2$, and pink for $s_3$). Line segments are slightly offset to enhance visibility in case they coincide with polygon boundary. (d) Resulting graph $G$. Edges with edge cost $0$ are shown in dark green, edges with edge cost of the length of the geodesic between the two points in $P$ are shown in gray. We highlight each vertex of $G$ with the color of the point $s_i$ to which it belongs. All colored in the same color constitute the set $\gamma_i$ ($\gamma_1$ highlighted in light orange, $\gamma_2$ highlighted in light turquoise, $\gamma_3$ highlighted in light pink, $\gamma_0$ highlighted in yellow).}
  \label{fig:appx-idea}
\end{figure}

%%%%%%%
% IDEA
%%%%%%%
The basic idea of our approximation algorithm is to create a candidate point for each connected component of the $k$-visibility region of each point in $S$. These candidate points are defined by the intersections of geodesics from the starting point $s$ to the cuts and the cuts themselves. We then build a complete graph on these candidate points (using the length of geodesics in $P$ between two points as the edge length in the graph). Finally, we group all candidate points that belong to the same point in $S$ and construct a group Steiner tree; by doubling this tree, we obtain a route. Our approximation algorithm performs the following steps:
 
%\noindent Our approximation algorithm performs the following steps: %We first outline the general approach:
\begin{enumerate}
\item For each $s_i\in S$, we compute the $k$-visibility region within $P$, $\kVR(s_i)$---and say that all CCs $\kVR^j(s_i)$ have ``color'' $s_i$. We denote $\kVR^j(s_i)$ as $k^j_i$. Let the cut of each $k^j_i$ be denoted by $c_{i,j}$, and let $\mathcal{C}^{all}$ denote the set of all cuts in $P$. See Figure~\ref{fig:appx-idea}(a) for an example of this step. 

\item We compute a geodesic $g_{i,j}$ from $s$ to each cut $c_{i,j}$. Let $p_{i,j}$ be the point where $g_{i,j}$ intersects $c_{i,j}$.  See Figure~\ref{fig:appx-idea}(b) for an example of this step, the $p_{i,j}$ are shown in light green.

\item We build the complete graph on the $p_{i,j}$ and $s$: for an edge $\{x,y\}$, we have cost$(\{x,y\})=\mbox{geodesic}_P(x,y)$. We introduce further vertices and edges: one vertex $\hat{c}_{i,j}$ per cut $c_{i,j}\in\mathcal{C}^{all}$. We add edges $\{p_{i,j},\hat{c}_{i,j}\}$ with edge cost $0$, and edges $\{p_{i,j},\hat{c}_{i',j'}\}$ with edge cost $0$ for all cuts $c_{i',j'}$ that $g_{i,j}$ intersects. (Rationale: any path or tour visiting $p_{i,j}$ must visit $c_{i',j'}$/$c_{i,j}$.)
%We build a complete graph on the $p_{i,j}$ and $s$: The edge cost of an edge between two vertices equals the length of the geodesic between the two points in $P$. We introduce further vertices and edges: one vertex $\hat{c}_{i,j}$ per cut $c_{i,j}\in\mathcal{C}^{all}$. We add edges $(p_{i,j},\hat{c}_{i,j})$ with edge cost $0$, and edges $(p_{i,j},\hat{c}_{i',j'})$ with edge cost $0$ for all cuts $c_{i',j'}$ that $g_{i,j}$ intersects. The rationale behind this step is that any path or tour visiting $p_{i,j}$ must visit $c_{i',j'}$ without incurring any further cost; moreover, the cut $c_{i,j}$ is visited in $p_{i,j}$ (again, without infering any futher cost). 
%If $g_{i,j}$ intersects the $k$-visibility region of another $s_{h}$, we insert a vertex $p_{h,i,j}$ and an edge $\{p_{i,j},p_{h,i,j}\}$ with edge cost $0$. The rationale behind this step is that visiting $p_{i,j}$ using $g_{i,j}$ passes the $k$-visibility region of $s_h$ without incurring any further cost.
Let the resulting graph be denoted as $G=(V,E)$. See Figure~\ref{fig:appx-idea}(c)/(d): the points of type $\hat{c}_{i,j}$ are shown in green. %We will refer to all vertices of $V(G)$ as ``green''.
%Let the resulting graph be denoted as $G=(V,E)$. See Figure~\ref{fig:appx-idea}(c) and (d) for an example of this step: the points of the type $p_{h,i,j}$ are shown in green, the resulting graph $G$ is given in Figure~\ref{fig:appx-idea}(d). %We will refer to all vertices of $V(G)$ as ``green''.
We have $|V(G)|=O(n\cdot |S|)$.

\item With $\gamma_i=\bigcup_{j=1}^{J_i} p_{i,j} \cup \bigcup_{j=1}^{J_i} \hat{c}_{i,j}$, $\gamma_{0}=s$, $Q=|S|+1$---that is, each group $\gamma_i$ contains all vertices in $V(G)$ of  color $s_i$, $\gamma_0$ contains the starting point that we must visit---we approximate the group Steiner tree problem on $G$, using the %polylogarithmic 
approximation by Garg et al.~\cite{gkr-paags-00}, the approximation ratio is $O\big(f(|V(G)|,|S|)\big)$, where $f(N,M)=\log^2 N \log\log N \log M$, e.g., polylogarithmic in $|V(G)|$ and~$|S|$.
%$O(\log^2 |V(G)| \log\log |V(G)| \log(|S|+1))\subseteq O(\mbox{polylog}\,|V(G)|)=O(\mbox{polylog}~n\cdot|S|)$. 
%With $\gamma_i=\bigcup_{j=1}^{J_i} p_{i,j} \cup \bigcup_{f=1}^{|S|}\bigcup_{j=1}^{J_f}p_{i,f,j}$, $Q=|S|$---that is, each group $\gamma_i$ contains all vertices in $V(G)$ of the color $s_i$---we approximate the group Steiner tree problem on $G$, using the polylogarithmic approximation by Garg et al.~\cite{gkr-paags-00}, the approximation ratio is $O(\log^2 |V(G)| \log\log |V(G)| \log|S|)$. 

\item We double the resulting tree to obtain a route $R$; it visits at least one vertex per color (one point in each $\kVR(s_i)$). Thus, $R$ is a feasible solution for $k$-TrWRP($S,P,s$) visiting one %green 
point per $\gamma_i$. $R$ is a polylog-approximation to the best tour that is feasible for $k$-TrWRP($S,P,s$), visiting one %green 
point per $\gamma_i$ using edges in $G$ (denoted by OPT$_G$($S,P,s$)). 
\end{enumerate}

To prove that $R$ is indeed an approximation with the claimed approximation factor, we alter the optimum $k$-transmitter watchman route, OPT($S,P,s$), (which we of course do not know in reality) to pass %green points 
points that represent vertices of $V(G)$, and show that this new tour is at most $3$ times as long as the optimum route.
%%%%%%%
% IDEA
%%%%%%%
The visited points are intersection points of independent geodesics and cuts (we obtain independent geodesics by ordering the geodesics to essential cuts by non-increasing length and filtering out geodesics to cuts that were visited by longer geodesics).
 The basic idea is:%\todo{move from closest windows to cuts in notation at some point?}
\begin{enumerate}
\item[a.] We identify all cuts of the $\kVR(s_i)$ that OPT($S,P,s$) visits, let these be the set $\mathcal{C}$ ($\mathcal{C}\subseteq\mathcal{C}^{all}$). 
Let $o_{i,j}$ denote the point where OPT($S,P,s$) visits $c_{i,j}$ (for the first time).
%If OPT($S,P,s$) visits the cut $c_{i,j}$ of $\kVR^j(s_i)$, let the point where OPT($S,P,s$) visits $c_{i,j}$ (for the first time) be denoted as $o_{i,j}$. %We will refer to all points $o_{i,j}$ as ``yellow''. %and consider the geodesics $g_{i,j}$ to these closes windows, let the sets of
%(If the optimum visits $c_{i,j}$ several times, we use the first visited point.)

\item[b.] We identify the subset of essential cuts $\mathcal{C'}\subseteq\mathcal{C}$.

\item[c.] We order the geodesics to the essential cuts $\mathcal{C'}$ by decreasing length: $\ell(g_1)\geq\ell(g_2)\geq\ldots\geq\ell(g_{|\mathcal{C'}|})$, where $\ell(\cdot)$ defines the Euclidean length.

\item[d.] $\mathcal{C''}\leftarrow\mathcal{C'}$; FOR $t=1$ TO $|\mathcal{C'}|$, we identify all $\mathcal{C}_t\subset\mathcal{C'}$ that $g_t$ intersects, and set $\mathcal{C''}\leftarrow\mathcal{C''}\setminus\mathcal{C}_t$.\\ 
%We remain with geodesics to 
$\mathcal{C''}\subseteq\mathcal{C'}$. We let $\mathcal{G}_{\mathcal{C''}}$ be the set of geodesics that end at cuts in $\mathcal{C''}$.

\item[e.] The geodesics in $\mathcal{G}_{\mathcal{C''}}$ constitute a set of {\it independent} geodesics, that is, no essential cut is visited by two of these geodesics. Moreover, each essential cut visited by OPT($S,P,s$)---each cut in $\mathcal{C'}$---is touched by exactly one of the geodesics. 

\item[f.] The geodesics in $\mathcal{G}_{\mathcal{C''}}$ intersect the cuts in $\mathcal{C''}$ in points of the type $p_{i,j}$, points that represent vertices of $V(G)$. We denote the set of all these %green 
points as $\mathcal{P}_{\mathcal{C''}}$ ($\mathcal{P}_{\mathcal{C''}}\subseteq\{p_{i,j}\mid i=1,\ldots,|S|, j=1,\ldots, J_i\}$).

\item[g.] We build the relative convex hull of all $o_{i,j}$ and all points in $\mathcal{P}_{\mathcal{C''}}$ (relative w.r.t. the polygon $P$). We denote this relative convex hull by CH$_{P}(\mbox{OPT},\mathcal{P}_{\mathcal{C''}})$.

\item[h.] Because we have a set of independent geodesics, no geodesic can intersect CH$_{P}(\mbox{OPT},\mathcal{P}_{\mathcal{C''}})$ between a point $o_{i,j}$ and a point $p_{i,j}$ on the same cut. %yellow and a green point on the same closest window. 
Thus, between any pair of %yellow points 
points of the type $o_{i,j}$ on CH$_{P}(\mbox{OPT},\mathcal{P}_{\mathcal{C''}})$, we have at most two %green points. 
points of $\mathcal{P}_{\mathcal{C''}}$.
%representing vertices of $V(G)$. 
We show that CH$_{P}(\mbox{OPT},\mathcal{P}_{\mathcal{C''}})$ has length of at most three times $\|$OPT($S,P,s$)$\|$. %we can charge a detour from OPT($S,P,s$) to these max two points to the segment between the two yellow points, and the new path between the two yellow points is at most three times as long as in OPT($S,P,s$). Thus, we show that there exists a tour that runs via green points of length at most three times that of OPT($S,P,s$).  Let $|$OPT($S,P,s$)$|$ be the length of the optimal tour. 

\item[i.] The relative convex hull of the points in $\mathcal{P}_{\mathcal{C''}}$, CH$_{P}(\mathcal{P}_{\mathcal{C''}})$, is not longer than CH$_{P}(\mbox{OPT},\mathcal{P}_{\mathcal{C''}})$, and we show that CH$_{P}(\mathcal{P}_{\mathcal{C''}})$ visits 
%all points in $\mathcal{P}_{\mathcal{C''}}$ and intersects all cuts in $\mathcal{C}.$%visits 
one point per $\gamma_i$ (except for $\gamma_0$).
%\todo{Missing: we need to charge for the case that $s$ is not on the relative convex hull, but 1 OPT should do here}

\item[j.] Because $s$ ($=\gamma_0$) might be located in the interior of CH$_{P}(\mathcal{P}_{\mathcal{C''}})$, we need to connect $s$ to CH$_{P}(\mathcal{P}_{\mathcal{C''}})$. This costs at most $\|\mbox{OPT}(S,P,s)\|$. %We denote the tour including $s$ by CH$_{P}^{s}(\mathcal{P}_{\mathcal{C''}})$.

\item[k.] Thus, we obtain (note $f(N,M)=\log^2  N$$ \log\log N \log M$):
\iffalse
$\|R\|  \leq  2\cdot \mbox{polylog} \cdot\|\mbox{OPT}_G(S,P,s)\| \leq  2\cdot \mbox{polylog}\cdot \|\mbox{CH}_{P}(\mathcal{P}_{\mathcal{C''}})\|\leq 2\cdot \mbox{polylog} \cdot\|\mbox{CH}_{P}(\mbox{OPT},\mathcal{P}_{\mathcal{C''}})\|\ \leq  2\cdot \mbox{polylog} \cdot (3+1) \cdot\|\mbox{OPT}(S,P,s)\|$.
\fi
%\iffalse
\begin{align}
\|R\| 
& \leq
\alpha_1\cdot f(|V(G)|,|S|)\|\mbox{OPT}_G(S,P,s)\|
\leq
\alpha_2\cdot f(n|S|,|S|)\|\mbox{CH}_{P}(\mathcal{P}_{\mathcal{C''}})\| \nonumber\\
& \leq
\alpha_3\cdot f(n|S|,|S|)\|\mbox{CH}_{P}(\mbox{OPT},\mathcal{P}_{\mathcal{C''}})\|
\leq
\alpha_4\cdot f(n|S|,|S|)\|\mbox{OPT}(S,P,s)\|\nonumber
\end{align}
for suitable constants $\alpha_1,\ldots, \alpha_4$.
%\fi
\iffalse
\begin{align}
\|R\| 
& \in 
O\big(f(|V(G)|,|S|)\big)\|\mbox{OPT}_G(S,P,s)\|
\subseteq
O\big(f(n|S|,|S|)\big)\|\mbox{CH}_{P}(\mathcal{P}_{\mathcal{C''}})\| \\
& \subseteq
O\big(f(n|S|,|S|)\big)\|\mbox{CH}_{P}(\mbox{OPT},\mathcal{P}_{\mathcal{C''}})\|
\subseteq
O\big(f(n|S|,|S|)\big)\|\mbox{OPT}(S,P,s)\|
\end{align}
\fi
\end{enumerate}

Hence, to show Theorem~\ref{th:appx-dis}, we need to prove steps e, h, and i. We show step e using Lemma~\ref{le:indep}; %and~\ref{le:visit-all}; 
step h using Lemmas~\ref{le:intersecting-ch},~\ref{le:max2}, and~\ref{le:ch-vs-opt}; and step i using Lemmas~\ref{le:compare-chs},~\ref{le:visit-PC''},~\ref{le:visit-all-cuts-in-C}, and~\ref{le:visit-gammais}.  

\begin{lemma}\label{le:indep}
For the geodesics in $\mathcal{G}_{\mathcal{C''}}$,  we have:
\begin{enumerate}
\item[I.] The geodesics in $\mathcal{G}_{\mathcal{C''}}$ are independent, that is, no cut in $\mathcal{C'}$ is visited by two of these geodesics. 
\item[II.] Each cut in $\mathcal{C'}$ is visited by a geodesic in $\mathcal{G}_{\mathcal{C''}}$.
\end{enumerate}
\end{lemma}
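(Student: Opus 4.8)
\textbf{Proof plan for Lemma~\ref{le:indep}.}

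My plan is to prove both parts directly from the construction of $\mathcal{C''}$ in step~d, treating them essentially as two sides of the same bookkeeping argument over the loop that builds $\mathcal{C''}$. The key object to keep in mind is that $\mathcal{C''}$ starts as a copy of $\mathcal{C'}$ and only ever has cuts removed from it: in iteration $t$ we remove exactly the cuts $\mathcal{C}_t$ that the geodesic $g_t$ intersects, where $g_t$ is the geodesic ending at the $t$-th longest essential cut. The geodesics $\mathcal{G}_{\mathcal{C''}}$ are precisely those ending at the cuts that \emph{survive} all iterations.

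For part~I (independence), I would argue by contradiction. Suppose two surviving geodesics $g_a, g_b \in \mathcal{G}_{\mathcal{C''}}$ both cross some essential cut $c \in \mathcal{C'}$, and without loss of generality say $g_a$ is processed no later than $g_b$ in the decreasing-length order, i.e.\ $\ell(g_a) \geq \ell(g_b)$ and $a \leq b$. The crucial structural fact I would invoke is that when a geodesic $g_a$ from $s$ crosses a cut $c$, the endpoint of the shorter geodesic $g_b$ (which ends at $c$) lies "beyond" $c$ relative to $s$, so that $g_a$ passing through $c$ forces $c$ to be among the cuts $g_a$ intersects --- hence $c \in \mathcal{C}_a$ and $c$ is removed from $\mathcal{C''}$ in iteration $a$. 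But then the cut $c_b$ at which $g_b$ ends is exactly $c$ (or $c$ is on $g_b$'s path), contradicting that $g_b$ survives into $\mathcal{G}_{\mathcal{C''}}$. I would make this precise using the nesting structure of windows/cuts relative to the fixed point $s$ recalled in Section~\ref{sec:not}: if two cuts are both crossed by geodesics from $s$, the longer geodesic's path already passes through the shorter one's cut, so the shorter one cannot survive. The main obstacle here is verifying that "crossing $c$" in the sense used to define $\mathcal{C}_t$ in step~d coincides with the geodesic genuinely separating $s$ from the far side of $c$; I would handle this by appealing to the definition $P_s(w)$ and the fact that geodesics from a common source $s$ to distinct cuts cannot properly cross without the longer one dominating the shorter's cut.

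For part~II (coverage), I would show that every cut $c \in \mathcal{C'}$ is crossed by some geodesic in $\mathcal{G}_{\mathcal{C''}}$, again tracking the loop. Fix $c \in \mathcal{C'}$. Either its own geodesic $g_c$ survives --- in which case $g_c \in \mathcal{G}_{\mathcal{C''}}$ crosses $c$ trivially and we are done --- or $g_c$ was removed at some iteration $t$ because $g_t$ (a longer geodesic) intersects $c$. In the latter case $g_t$ itself crosses $c$, so it suffices to show $g_t \in \mathcal{G}_{\mathcal{C''}}$, i.e.\ that the cut $c_t$ at which $g_t$ ends survives. If $c_t$ also got removed, it was removed by a still-longer geodesic that crosses $c_t$ and hence, by the same nesting argument, also crosses $c$; iterating, the decreasing-length ordering guarantees the chain of "removers" terminates at a geodesic whose endpoint cut is never removed (the longest such geodesic crossing $c$ can only be removed by itself, which does not happen). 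That terminal geodesic lies in $\mathcal{G}_{\mathcal{C''}}$ and crosses $c$. I expect part~II to be the slightly more delicate of the two, since it requires this termination/monotonicity argument over the processing order; the safeguard is that $\ell(\cdot)$ induces a strict priority, so each removal step strictly advances toward the front of the ordered list and the process cannot cycle.
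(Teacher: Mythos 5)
Your overall strategy---bookkeeping over the deletion loop of step~d, using the decreasing-length order---is the same as the paper's, but two of your key steps do not hold up. The most serious is the ``nesting'' fact you invoke. In part~II you need: if $g_{t'}$ crosses $c_t$ and $g_t$ crosses $c$, then $g_{t'}$ also crosses $c$. The nesting this would require (every path from $s$ that crosses $c_t$ has already crossed $c$) is precisely the statement that $c_t$ dominates $c$, i.e.\ $P_s(c)\subset P_s(c_t)$---which is impossible, because $c\in\mathcal{C'}$ is essential and hence dominated by no window. So your chain-of-removers argument does not close: the terminal surviving geodesic is only guaranteed to cross the next-to-last cut of the chain, not the original cut $c$. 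The correct way to finish (and what the paper's proof implicitly uses) is a monotonicity observation you state in passing but never exploit: if any geodesic $g$ intersects a cut $c'$, its prefix up to the intersection point is a path from $s$ to $c'$, so $\ell(g_{c'})\le\ell(g)$. Hence once $g_t$ is processed with its cut $c_t$ still alive, no geodesic processed later (which is no longer than $g_t$) can intersect $c_t$, except in the degenerate tie case where its endpoint lies exactly on $c_t$; therefore $c_t$ survives to the end. This makes your chain have length one---the remover of $c$ itself lies in $\mathcal{G}_{\mathcal{C''}}$---and part~II follows with no geometric nesting claim at all.

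In part~I your contradiction is a non sequitur. From ``$g_a$ crosses $c$, hence $c\in\mathcal{C}_a$ is removed in iteration $a$'' you conclude that ``the cut $c_b$ at which $g_b$ ends is exactly $c$''---but that was never assumed; $g_b$ was only assumed to cross $c$. If $c$ is a third cut, distinct from both $c_a$ and $c_b$, then its removal is harmless to the survival of $g_a$ and $g_b$, and no contradiction arises. What the deletion bookkeeping actually yields---and all that the paper's own proof establishes, and all that is used downstream in Lemma~\ref{le:intersecting-ch} and Lemma~\ref{le:visit-PC''}---is the statement for surviving cuts: no cut of $\mathcal{C''}$ is intersected by a surviving geodesic other than its own, because that surviving geodesic is processed at some iteration and would then have deleted the cut, contradicting its survival. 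You should either restrict part~I to cuts in $\mathcal{C''}$ and give this one-line argument, or supply a genuine geometric argument for cuts in $\mathcal{C'}\setminus\mathcal{C''}$; your current text does neither.
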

\begin{proof}
In step c, we order the geodesics to the essential cuts $\mathcal{C'}$ by decreasing length: $\ell(g_1)\geq\ell(g_2)\geq\ldots\geq\ell(g_{|\mathcal{C'}|})$. In step d, we iterate over these geodesics in the order $g_1, g_2, \ldots, g_{|\mathcal{C'}|}$: if the current geodesic $g_t$ intersects cuts $c_{t_1},\ldots,c_{t_Y}\in\mathcal{C'}$ we delete the (shorter) geodesics to these cuts ($g_{t_1},\ldots, g_{t_Y}$). Thus, after the last iteration, no two geodesics of those we are left with---the geodesics in $\mathcal{G}_{\mathcal{C''}}$---visit the same cut in $\mathcal{C'}$. 

Moreover, $g_1, g_2, \ldots, g_{|\mathcal{C'}|}$ visit all cuts in $\mathcal{C'}$. We only delete a geodesic from this set if its cut is already visited by a longer geodesic. Thus, we maintain the property that all cuts in $\mathcal{C'}$ are visited.
\end{proof}

\iffalse
\begin{lemma}\label{le:visit-all}
Each cut in $\mathcal{C'}$ is visited by a geodesic in in $\mathcal{G}_{\mathcal{C''}}$.
\end{lemma}
\begin{proof}
\end{proof}
\fi

\begin{lemma}\label{le:intersecting-ch}
Consider a cut $c\in\mathcal{C''}$\!, from CC $j$ of a $k$-visibility region for $s_i\in S$, $\kVR^{j}(s_{i})$, for which both the point $o_{i,j}$ and the point $p_{i,j}$ are on {\rm CH$_{P}(\mbox{OPT},\mathcal{P}_{\mathcal{C''}})$}. No geodesic in $\mathcal{G}_{\mathcal{C''}}$ intersects $c$ between $o_{i,j}$ and $p_{i,j}$.
%We consider the relative convex hull of all $o_{i,j}$ and all points in $\mathcal{P}_{\mathcal{C''}}$: CH$_{P}(\mbox{OPT},\mathcal{P}_{\mathcal{C''}})$. 
%Consider a cut $c\in\mathcal{C''}$ (of CC $j$ of a $k$-visibility region for $s_i\in S$, $\kVR^{j}(s_{i})$) for which both the point $o_{i,j}$ and the point $p_{i,j}$ are on CH$_{P}(\mbox{OPT},\mathcal{P}_{\mathcal{C''}})$. No geodesic in $\mathcal{G}_{\mathcal{C''}}$ intersects $c$ between $o_{i,j}$ and $p_{i,j}$.
\end{lemma}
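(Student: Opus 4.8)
The plan is to derive the statement almost directly from the independence of the geodesics established in Lemma~\ref{le:indep}, supplemented by one elementary geometric fact: a shortest path from $s$ to a cut meets that cut only at its endpoint. Concretely, I would reduce the claim to showing (i) that $c=c_{i,j}$ is met by exactly one geodesic of $\mathcal{G}_{\mathcal{C''}}$, namely $g_{i,j}$, and (ii) that $g_{i,j}$ touches $c$ only at $p_{i,j}$, which is an endpoint of the segment $\overline{o_{i,j}p_{i,j}}$ and therefore does not lie strictly between $o_{i,j}$ and $p_{i,j}$.

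First I would note that, since $c=c_{i,j}\in\mathcal{C''}\subseteq\mathcal{C'}$, the geodesic $g_{i,j}$ ending on $c$ survived the pruning of step~d, so $g_{i,j}\in\mathcal{G}_{\mathcal{C''}}$, and it visits $c$ because it terminates at $p_{i,j}\in c$. By Lemma~\ref{le:indep}, item~I, no cut of $\mathcal{C'}$ is visited by two geodesics of $\mathcal{G}_{\mathcal{C''}}$; as $g_{i,j}$ already visits $c$, this forces $g_{i,j}$ to be the \emph{only} geodesic of $\mathcal{G}_{\mathcal{C''}}$ that meets $c$ at all. Hence every other geodesic of $\mathcal{G}_{\mathcal{C''}}$ avoids $c$ entirely, and in particular avoids the sub-segment of $c$ between $o_{i,j}$ and $p_{i,j}$.

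It then remains only to handle $g_{i,j}$ itself. Here I would use that $g_{i,j}$ is a shortest path from $s$ to $c$ and that the cut $c$ splits $P$ into $P_s(c)$ (containing $s$) and its complement. A shortest path from $s$ to $c$ stays within $P_s(c)$ and first reaches $c$ at the geodesically nearest point $p_{i,j}$; crossing $c$ before that would push a proper prefix onto the far side and violate minimality. Thus $g_{i,j}\cap c=\{p_{i,j}\}$, and since $p_{i,j}$ is an endpoint of $\overline{o_{i,j}p_{i,j}}$, the path $g_{i,j}$ does not meet the open segment strictly between $o_{i,j}$ and $p_{i,j}$. Combining this with the previous paragraph gives the lemma.

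The statement is therefore mostly a corollary of the already-proven independence, and the single point requiring care is fact~(ii): that the geodesic to a cut meets the cut only at its endpoint. This is the same separation/prefix-minimality property underlying the length-monotonicity used in Lemma~\ref{le:indep}, so I expect it to be routine. I would also remark that the hypothesis ``both $o_{i,j}$ and $p_{i,j}$ lie on CH$_{P}(\mbox{OPT},\mathcal{P}_{\mathcal{C''}})$'' is not used by the independence argument itself; it only fixes the configuration of step~h, in which $\overline{o_{i,j}p_{i,j}}$ arises as part of the hull boundary.
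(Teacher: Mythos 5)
Your proof is correct, but it is organized differently from the paper's, so a comparison is worthwhile. The paper never invokes Lemma~\ref{le:indep} here; instead it argues inline: assuming a geodesic $g_{c'}\in\mathcal{G}_{\mathcal{C''}}$ to a cut $c'\neq c$ meets $c$ at a point $p_c$, it splits into cases on $\ell(g_{c'})$ versus $\ell(g_c)$ --- if $g_{c'}$ is longer, step d would have pruned $c$ out of $\mathcal{C''}$, contradicting $c\in\mathcal{C''}$; if shorter, the prefix $g_{c'[s;p_c]}$ would be a path from $s$ to $c$ shorter than $g_c$, contradicting that $g_c$ is the geodesic to $c$; if equal, either the prefix argument applies or $p_c$ is the terminal point of $g_{c'}$, in which case $p_c=p_{i,j}$ and the claim holds anyway. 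Your two ingredients are these same facts packaged differently: your appeal to Lemma~\ref{le:indep}(I) encapsulates the pruning argument (the paper's first case), and your ``first-touch'' property of $g_{i,j}$ is exactly the prefix-minimality the paper uses in its second and third cases. Your version buys brevity and modularity, and it is legitimate: you apply Lemma~\ref{le:indep}(I) only to a cut of $\mathcal{C''}$, which is precisely where the pruning of step d genuinely guarantees independence, and that lemma is established before this one. What the paper's self-contained case analysis buys is explicit handling of the degenerate tie that your argument glosses over: if another geodesic of equal length \emph{terminates} at a point of $c$, that geodesic does meet $c$, so your blanket conclusion that all other geodesics of $\mathcal{G}_{\mathcal{C''}}$ ``avoid $c$ entirely'' is only as strong as a reading of ``visited'' in Lemma~\ref{le:indep}(I) that forbids such endpoint ties; the paper instead notes that in this case the contact point coincides with $p_{i,j}$, hence is not strictly between $o_{i,j}$ and $p_{i,j}$, which is all the lemma needs. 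If you add that one sentence for the tie case (or state explicitly that you read Lemma~\ref{le:indep}(I) as excluding it), your proof is airtight.
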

\begin{proof}
Assume that there exists a geodesic $g_{c'}\in\mathcal{G}_{\mathcal{C''}}$ to a cut $c'\neq c, c'\in\mathcal{C''}$ that intersects $c$ between $o_{i,j}$ and $p_{i,j}$.
Let $c'$ be the cut of $\kVR^{j'}(s_{i'})$. Let $p_c$ denote the point in which $g_{c'}$ intersects $c$. %in $p_{i,i',j'}$.
If $\ell(g_{c'})>\ell(g_{c})$, we would have deleted $g_c$ in step d, hence $c\notin\mathcal{C''}$. If $\ell(g_{c'})<\ell(g_{c})$, the geodesic to $c'$ restricted to the part between $s$ and $p_{c}$, $g_{c'[s;p_{c}]}$, is shorter than $g_c$, a contradiction to $g_c$ being the geodesic to $c$. If $\ell(g_{c'})=\ell(g_{c})$, either $\ell(g_{c'[s;p_{c}]})<\ell(g_{c'})=\ell(g_{c})$ or (if $p_{c}$ on $c'$) $p_{i,j}=p_{c}$ and the claim holds.
%We have three possibilities: (i) $\ell(g_{c'})>\ell(g_{c})$, (ii) $\ell(g_{c'})<\ell(g_{c})$, or (iii) $\ell(g_{c'})=\ell(g_{c})$.
%In case (i), we would have deleted $g_c$ in step d, hence $c\notin\mathcal{C''}$, a contradiction.
%In case (ii), the geodesic to $c'$ restricted to the part between $s$ and $p_{c}$, $g_{c'[s;p_{c}]}$, is shorter than $g_c$: $\ell(g_{c'[s;p_{c}]})\leq\ell(g_{c'})<\ell(g_{c})$, a contradiction to $g_c$ being the geodesic to $c$. In case (iii), if $p_{c}$ does not lie on $c'$, we have $\ell(g_{c'[s;p_{c}]})<\ell(g_{c'})=\ell(g_{c})$, again a contradiction. If $p_{c}$ does lie on $c'$, $p_{i,j}=p_{c}$ and the claim holds.
\end{proof}

\begin{lemma}\label{le:max2}
Between any pair of points of the type $o_{i,j}$ on {\rm CH$_{P}(\mbox{OPT},\mathcal{P}_{\mathcal{C''}})$}, we have at most two points in $\mathcal{P}_{\mathcal{C''}}$.
%representing vertices of $V(G)$.
\end{lemma}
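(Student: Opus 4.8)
The plan is to argue by contradiction: I would assume that three points of $\mathcal{P}_{\mathcal{C''}}$, say $p_1,p_2,p_3$, occur consecutively along the boundary of CH$_{P}(\mbox{OPT},\mathcal{P}_{\mathcal{C''}})$ with no point of type $o_{i,j}$ appearing between them, and then exhibit such an $o$-point to reach a contradiction. Write $c_1,c_2,c_3$ for the three cuts carrying these points and $o_1,o_2,o_3$ for the corresponding OPT-crossings, so that $p_t$ and $o_t$ both lie on the chord $c_t$. I would concentrate on the middle point $p_2$ and its cut $c_2\in\mathcal{C''}$.

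The first key step uses independence (Lemma~\ref{le:indep}). Since $P$ is simple, the chord $c_2$ splits $P$ into $P_s(c_2)$ (containing $s$) and the complementary subpolygon, and any path in $P$ joining the two parts must cross $c_2$. Because $c_2\in\mathcal{C''}\subseteq\mathcal{C'}$ is crossed by no geodesic of $\mathcal{G}_{\mathcal{C''}}$ other than $g_2$, every geodesic $g_y$ with $y\neq 2$ stays in $\overline{P_s(c_2)}$, so every point of $\mathcal{P}_{\mathcal{C''}}$ lies in the closed half-plane bounded by the supporting line of $c_2$ that contains $s$, while $p_2$ lies on that line. In particular $p_1$ and $p_3$ lie on the $s$-side of $c_2$, and both boundary arcs of CH leaving $p_2$ (toward $p_1$ and toward $p_3$) initially head into the closed $s$-side, since a shortest path between points of $\overline{P_s(c_2)}$ gains nothing by crossing the straight chord $c_2$. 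Thus $c_2$ supports the hull locally at $p_2$, with the hull lying on the $s$-side near $p_2$.

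The second step localizes the partner point $o_2$, which by construction lies on $c_2$ and is enclosed by CH. If the hull did not reach the far side of $c_2$ at all, then the line through $c_2$ would support the entire hull and, by geodesic convexity, its intersection with the hull would collapse to the contact point $p_2$, forcing $o_2=p_2$; but then an $o$-point sits exactly at $p_2$, contradicting that $p_1,p_2,p_3$ carry no $o$-point. Otherwise the hull reaches the far side of $c_2$, so its boundary re-crosses the chord $c_2$; here I would invoke Lemma~\ref{le:intersecting-ch}, that no geodesic of $\mathcal{G}_{\mathcal{C''}}$ crosses $c_2$ between $o_2$ and $p_2$, to argue that OPT must return through $c_2$ next to $p_2$, so that the OPT-crossing realizing this return is an $o$-point lying on the hull arc strictly between $p_1$ and $p_3$ — again a contradiction. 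Either way no three consecutive $p$-points can occur, which is exactly the claim that at most two points of $\mathcal{P}_{\mathcal{C''}}$ lie between consecutive points of type $o_{i,j}$.

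I expect the main obstacle to be making the far-side case of the second step rigorous. The subtlety is that CH$_{P}(\mbox{OPT},\mathcal{P}_{\mathcal{C''}})$ is a relative (geodesic) convex hull, hence need not be Euclidean-convex, so a linear functional may attain several local maxima along its boundary; consequently local support at $p_2$ does not upgrade automatically to global support, and $o$-points belonging to cuts other than $c_2$ may genuinely lie beyond $c_2$. Handling this cleanly requires combining the convex-position property of $\mathcal{P}_{\mathcal{C''}}$ (each $p_x$ is extreme in the outward normal direction of its own cut $c_x$) with Lemma~\ref{le:intersecting-ch} to pin down precisely where the hull boundary re-crosses $c_2$, and to certify that this crossing is witnessed by an $o$-point rather than by a bare reflex vertex of $P$. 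The factor two (rather than one) in the statement is exactly what this analysis must leave room for, since a single cut contributes the matched pair $o_x,p_x$ that may appear adjacently on the hull.
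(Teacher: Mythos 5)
Your proposal takes a genuinely different route from the paper, but it contains a gap that you yourself flag and that cannot be closed by the mechanism you sketch. The fatal step is the far-side case: the defining points of $\mbox{CH}_{P}(\mbox{OPT},\mathcal{P}_{\mathcal{C''}})$ contain exactly \emph{one} $o$-point per cut, namely the point where OPT visits that cut \emph{for the first time} (step a of the construction). When the hull reaches the far side of $c_2$, OPT must indeed re-cross $c_2$, but such a return crossing is \emph{not} an $o$-point, is not among the points defining the hull, and need not lie on the hull boundary at all; so the crossing you want to exhibit between $p_1$ and $p_3$ cannot be ``certified by an $o$-point'' as your sketch requires. Worse, the hull arc $p_1\to p_2\to p_3$ consists of geodesics between points of $\overline{P_s(c_2)}$ and therefore stays in $\overline{P_s(c_2)}$, while the far-side defining points ($o$-points of cuts crossing $c_2$) sit on \emph{other} parts of the hull boundary---nothing in Lemma~\ref{le:intersecting-ch} or convex position forces one of them onto the arc between $p_1$ and $p_3$. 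Two further steps are also incorrect as stated, though repairable: (i) from independence (Lemma~\ref{le:indep}) you conclude that all of $\mathcal{P}_{\mathcal{C''}}$ lies in the closed half-plane bounded by the supporting line of $c_2$; but the subpolygon $P_s(c_2)$ of a simple polygon is in general \emph{not} contained in that half-plane (it can wrap around the endpoints of the chord), so only containment in the subpolygon $\overline{P_s(c_2)}$, and a purely local half-plane statement near $p_2$, are available; (ii) in your first case, a supporting line can meet a (relatively) convex set in a segment rather than a point, so the correct conclusion is not $o_2=p_2$ but that the segment $\overline{o_2p_2}\subseteq c_2$ lies on the hull boundary---which still yields a contradiction, but not for the reason you give.

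The paper avoids the problematic localization at the middle cut altogether. It takes two \emph{consecutive} $o$-points $o_{i,j}$, $o_{i',j'}$ on the hull, uses Lemma~\ref{le:intersecting-ch} to conclude that the only $p$-points that can appear adjacent to them are their own partners $p_{i,j}$ and $p_{i',j'}$, and then assumes a third point $p_{\kappa,\lambda}$ lies between $p_{i,j}$ and $p_{i',j'}$. Since $o_{i,j}$ and $o_{i',j'}$ are consecutive, OPT visits the partner $o_{\kappa,\lambda}$ either before $o_{i,j}$ or after $o_{i',j'}$; using that the cut $c''$ through $p_{\kappa,\lambda}$ and $o_{\kappa,\lambda}$ is a straight segment, the convex polygon $P_\triangle$ with vertices $o_{i,j}, p_{i,j}, p_{\kappa,\lambda}, o_{\kappa,\lambda}, o_{i',j'}$ forces $p_{i',j'}$ into its interior, whence $o_{i',j'}$ could not lie on the hull---a contradiction. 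That argument never needs to control where, or whether, the hull re-crosses any cut, which is precisely the step your version cannot complete.
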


\begin{figure}[b]
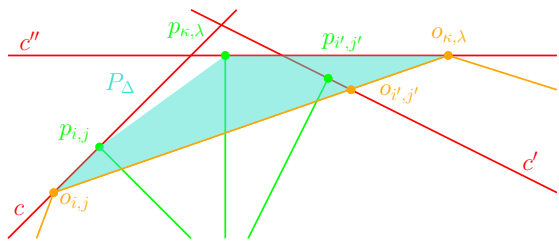
   
\centering
%\hspace*{.01\textwidth}
\comicII{.6\textwidth}{figures/at-most-two-points2}
%\hspace*{.01\textwidth}
  \caption{\small Example for the proof of Lemma~\ref{le:max2}. Cuts are shown in red, points of the type $p_{i,j}$ in green, and the optimal route and points of the type $o_{i,j}$ in orange. The convex polygon $P_\triangle$ is shown in turquoise.  }
  \label{fig:at-most-two}
\end{figure}

\begin{proof}
Let $o_{i,j}$ and $o_{i', j'}$ be two consecutive points from OPT on CH$_{P}(\mbox{OPT},\mathcal{P}_{\mathcal{C''}})$, see Figure~\ref{fig:at-most-two} for an example of this proof construction.
By Lemma~\ref{le:intersecting-ch}, $p_{i,j}$ and $p_{i', j'}$ can lie between $o_{i,j}$ and $o_{i', j'}$, but we can have no point 
$p_{\kappa,\lambda}$ between $o_{i,j}$ and $p_{i,j}$ or between $o_{i',j'}$ and $p_{i',j'}$.
Assume that there exists a point $p_{\kappa,\lambda}$ between $p_{i,j}$ and $p_{i',j'}$ on CH$_{P}(\mbox{OPT},\mathcal{P}_{\mathcal{C''}})$. Moreover, let $p_{i,j}$, $p_{i',j'}$, and $p_{\kappa,\lambda}$ be on cuts $c, c'$ and $c''$, respectively. OPT visits $o_{\kappa,\lambda}$ on $c''$. As $o_{i,j}$ and $o_{i', j'}$ are consecutive points from OPT on CH$_{P}(\mbox{OPT},\mathcal{P}_{\mathcal{C''}})$, OPT visits the three points either in order $o_{i,j}, o_{i', j'}, o_{\kappa,\lambda}$ or $o_{\kappa,\lambda}, o_{i,j}, o_{i', j'}$. W.l.o.g., assume the order  $o_{i,j}, o_{i', j'}, o_{\kappa,\lambda}$. The cut $c''$ is a straight-line segment. Consider the convex polygon $P_\triangle$ with vertices $o_{i,j}, p_{i,j}, p_{\kappa, \lambda}, o_{\kappa, \lambda}, o_{i',j'}, o_{i,j}$. The point $p_{i',j'}$ must lie in $P_\triangle$'s interior. Moreover, $o_{i',j'}$ cannot lie on CH$_{P}(\mbox{OPT},\mathcal{P}_{\mathcal{C''}})$; a contradiction.
\iffalse
Let $o_{i,j}$ and $o_{i', j'}$ bet two consecutive points from OPT on CH$_{P}(\mbox{OPT},\mathcal{P}_{\mathcal{C''}})$.
By Lemma~\ref{le:intersecting-ch}, both $p_{i,j}$ and $p_{i', j'}$  can lie between $o_{i,j}$ and $o_{i', j'}$ on CH$_{P}(\mbox{OPT},\mathcal{P}_{\mathcal{C''}})$. Moreover, we can have no point $p_{\kappa,\lambda}$ between $o_{i,j}$ and $p_{i,j}$ or between $o_{i',j'}$ and $p_{i',j'}$ on CH$_{P}(\mbox{OPT},\mathcal{P}_{\mathcal{C''}})$.

Assume that there exists a point $p_{\kappa,\lambda}$ between $p_{i,j}$ and $p_{i',j'}$ on CH$_{P}(\mbox{OPT},\mathcal{P}_{\mathcal{C''}})$. Moreover, let $p_{i,j}$, $p_{i',j'}$, and $p_{\kappa,\lambda}$ be on cut $c, c'$ and $c''$, respectively. The optimum visits the point $o_{\kappa,\lambda}$ on $c''$. As $o_{i,j}$ and $o_{i', j'}$ are consecutive points from OPT on CH$_{P}(\mbox{OPT},\mathcal{P}_{\mathcal{C''}})$, OPT visits the three points either in order $o_{i,j}, o_{i', j'}, o_{\kappa,\lambda}$ or $o_{\kappa,\lambda}, o_{i,j}, o_{i', j'}$. W.l.o.g., assume that OPT visits them in order  $o_{i,j}, o_{i', j'}, o_{\kappa,\lambda}$. The cut $c''$ is a straight-line segment. Consider the convex polygon $P_\triangle$ with vertices $o_{i,j}, p_{i,j}, p_{\kappa, \lambda}, o_{\kappa, \lambda}, o_{i',j'}, o_{i,j}$. The point $p_{i',j'}$ must lie in $P_\triangle$'s interior. Moreover, $o_{i',j'}$ cannot lie on CH$_{P}(\mbox{OPT},\mathcal{P}_{\mathcal{C''}})$; a contradiction.
\fi
\end{proof}

\begin{lemma}\label{le:ch-vs-opt}
$\|\mbox{\rm CH}_{P}(\mbox{OPT},\mathcal{P}_{\mathcal{C''}})\| \leq   3 \cdot\|\mbox{\rm OPT}(S,P,s)\|$
\end{lemma}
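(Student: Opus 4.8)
The goal is to bound $\|\mbox{CH}_{P}(\mbox{OPT},\mathcal{P}_{\mathcal{C''}})\|$ by $3\cdot\|\mbox{OPT}(S,P,s)\|$. The plan is to view the relative convex hull as a closed curve whose vertices alternate, in a controlled way, between points of the type $o_{i,j}$ (which lie \emph{on} the optimal route) and points of the type $p_{i,j}$ (which are the geodesic landing points). Since every $o_{i,j}$ is a point visited by OPT, the cyclic subsequence of just the $o$-points, connected by geodesics inside $P$, has total length at most $\|\mbox{OPT}(S,P,s)\|$: OPT itself is a closed route through all these points, and shortcutting it to the subsequence of $o$-points via geodesics can only decrease the length (the geodesic between two consecutive $o$-points is no longer than the portion of OPT between them). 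I would state this as the baseline: the $o$-only polygon has length at most $\|\mbox{OPT}\|$.

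Next I would account for the extra cost incurred by detouring through the $p$-points. By Lemma~\ref{le:max2}, between any two consecutive $o$-points on the hull there are at most two $p$-points, and by Lemma~\ref{le:intersecting-ch} these are exactly the partners $p_{i,j},p_{i',j'}$ sitting on the same cuts as the two flanking $o$-points $o_{i,j},o_{i',j'}$. The key geometric fact I would invoke is that $o_{i,j}$ and $p_{i,j}$ lie on the \emph{same cut} $c$, which is a straight-line segment, so the geodesic detour from $o_{i,j}$ out to $p_{i,j}$ and the analogous detour at the other endpoint can each be charged against a geodesic $g_{i,j}\in\mathcal{G}_{\mathcal{C''}}$. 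Because the geodesics in $\mathcal{G}_{\mathcal{C''}}$ are \emph{independent} (Lemma~\ref{le:indep}), each cut in $\mathcal{C'}$ carries exactly one such geodesic, so the detour bookkeeping does not double-count: each $p$-point on the hull is the endpoint of a distinct independent geodesic.

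The quantitative heart is then to bound the total detour length by $2\cdot\|\mbox{OPT}\|$. Here I would argue that the length of the relative convex hull, being the shortest curve in $P$ enclosing all the chosen points, is at most the length of \emph{any} closed curve in $P$ that passes through all of them; in particular it is at most the length of the closed curve obtained by walking along OPT but, at each essential cut, making a round-trip excursion from $o_{i,j}$ along the cut to $p_{i,j}$ and back. Each such round trip has length at most twice the along-cut distance from $o_{i,j}$ to $p_{i,j}$, and I would bound this by the geodesic length $\ell(g_{i,j})$ (or rather by the relevant OPT-portion it substitutes for), summing over the independent geodesics. Since the independent geodesics touch disjoint cuts and each is realized by a sub-path that OPT must in effect also traverse to reach its own cut-crossing $o_{i,j}$, the sum of excursion lengths telescopes against $\|\mbox{OPT}\|$, giving a contribution of at most $2\cdot\|\mbox{OPT}\|$. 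Combining the baseline $o$-polygon ($\le\|\mbox{OPT}\|$) with the detours ($\le 2\cdot\|\mbox{OPT}\|$) yields the factor $3$.

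The main obstacle I anticipate is making the charging in the last paragraph fully rigorous rather than merely plausible. The delicate point is that the ``along-cut distance'' from $o_{i,j}$ to $p_{i,j}$ is not obviously bounded by a portion of OPT that is disjoint across different cuts; the independence of the geodesics (Lemma~\ref{le:indep}) is precisely what should prevent overlapping charges, but one must verify that when the hull passes $o_{i,j}, p_{i,j}, p_{i',j'}, o_{i',j'}$ in sequence, the segment $p_{i,j}p_{i',j'}$ of the hull is not being counted against OPT twice. I would handle this by comparing the hull directly, edge by edge, to the augmented OPT-with-excursions curve and invoking minimality of the relative convex hull, so that no explicit per-cut inequality chain is needed; the structural Lemmas~\ref{le:indep},~\ref{le:intersecting-ch}, and~\ref{le:max2} guarantee that the augmented curve is well-defined and encloses $\mathcal{P}_{\mathcal{C''}}$, after which the factor $3$ follows from the two length contributions.
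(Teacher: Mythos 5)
Your overall skeleton matches the paper's: build a closed curve that passes through every point of $\mbox{CH}_{P}(\mbox{OPT},\mathcal{P}_{\mathcal{C''}})$ by modifying OPT with local detours (using Lemmas~\ref{le:intersecting-ch} and~\ref{le:max2} to know there are at most two $p$-points per hull segment), and then invoke minimality of the relative convex hull. That part is sound and is exactly how the paper argues. However, the quantitative heart of your proof---bounding the total detour cost by $2\cdot\|\mbox{OPT}(S,P,s)\|$---has a genuine gap, and you essentially concede it yourself. Your proposed excursion runs \emph{along the cut} from $o_{i,j}$ to $p_{i,j}$, and you try to charge its length either against $\ell(g_{i,j})$ or against ``the relevant OPT-portion it substitutes for.'' Neither charge works: the geodesics $g_{i,j}$ run from $s$ to the cuts, each can individually have length $\Theta(\|\mbox{OPT}\|)$, and there can be many independent ones, so their sum does not telescope against $\|\mbox{OPT}\|$ (independence of the geodesics, Lemma~\ref{le:indep}, makes the \emph{cuts} disjoint, not the charges against OPT); and the along-cut distance $|o_{i,j}p_{i,j}|$ is not directly bounded by the local arc of OPT between consecutive $o$-points. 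Comparing the hull ``edge by edge'' to the augmented curve, as you suggest at the end, does not repair this: the augmented curve itself has no proven length bound, which is precisely what is missing.

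The idea your proof lacks is the anchor point the paper introduces: since $o_{i,j}$ and $p_{i,j}$ lie on the same cut $c_{i,j}$, the geodesic $g_{i,j}$ must cross OPT at some point $\rho_{i,j}$ lying on the arc of OPT between $o_{i,j}$ and $o_{i',j'}$. The detour is made from $\rho_{i,j}$ (a point \emph{on} OPT), not from $o_{i,j}$ along the cut. Because the tail of a shortest path to a cut is itself a shortest path to that cut, $\ell(\rho_{i,j},p_{i,j})\leq\ell(\rho_{i,j},o_{i,j})$, and the right-hand side is at most the length of the OPT sub-arc from $o_{i,j}$ to $\rho_{i,j}$. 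The two detours inside one hull segment (at $\rho_{i,j}$ and at $\rho_{i',j'}$) thus each cost at most twice a sub-arc of OPT, and these sub-arcs are disjoint, giving arc $+\,2\cdot$arc $=3\cdot$arc per segment and hence the factor $3$ overall. Without this $\rho$-trick your construction either has no bound at all, or---if one patches it via the triangle inequality $|o_{i,j}p_{i,j}|\leq 2\,\ell(\rho_{i,j},o_{i,j})$---yields a weaker constant than the claimed $3$.
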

\begin{proof}
By Lemmas~\ref{le:intersecting-ch} and~\ref{le:max2}, we have that between two consecutive points of $\mbox{OPT}(S,P,s)$ on $\mbox{CH}_{P}(\mbox{OPT},\mathcal{P}_{\mathcal{C''}})$, $o_{i,j}$ and $o_{i',j'}$, we have at most two points where a geodesic visits a cut: $p_{i,j}$ and $p_{i',j'}$.
At $p_{i,j}$, the geodesic $g_{i,j}$ to the cut $c_{i,j}$ of $\kVR^j(s_i)$ visits $c_{i,j}$. Because $o_{i,j}$ and $p_{i,j}$ are both on $c_{i,j}$ (and $o_{i',j'}$ and $p_{i',j'}$ are both on $c_{i',j'}$), $g_{i,j}$ intersects $\mbox{OPT}(S,P,s)$ between $o_{i,j}$ and $o_{i',j'}$. Let the point of intersection be denoted as $\rho_{i,j}$. Because $g_{i,j}$ is a geodesic, we have: $\ell(\rho_{i,j}, p_{i,j})\leq \ell(\rho_{i,j}, o_{i,j})$. Analogously, we have: $\ell(\rho_{i',j'}, p_{i',j'})\leq \ell(\rho_{i',j'}, o_{i',j'})$. See Figure~\ref{fig:reroute-opt} for an example.

Hence, if we alter $\mbox{OPT}(S,P,s)$ between $o_{i,j}$ and $o_{i',j'}$ to follow the sequence $o_{i,j},\rho_{i,j}, p_{i,j},$ $\rho_{i,j}, \rho_{i',j'}, p_{i',j'}, \rho_{i',j'}, o_{i',j'}$, we obtain a new tour $T$ that visits all points on $\mbox{CH}_{P}(\mbox{OPT},\mathcal{P}_{\mathcal{C''}})$ and $\|T\|\leq 3 \cdot\|\mbox{OPT}(S,P,s)\|$, since $\ell(\rho_{i,j}, p_{i,j})\leq \ell(\rho_{i,j}, o_{i,j})$ for all pairs $i,j$. $\mbox{CH}_{P}(\mbox{OPT},\mathcal{P}_{\mathcal{C''}})$ is the shortest tour that visits all these points, thus, $\|\mbox{CH}_{P}(\mbox{OPT},\mathcal{P}_{\mathcal{C''}})\|\leq\|T\|$. Hence, $\|\mbox{CH}_{P}(\mbox{OPT},\mathcal{P}_{\mathcal{C''}})\| \leq   3 \cdot\|\mbox{OPT}(S,P,s)\|$. \end{proof}

\begin{figure}[t]
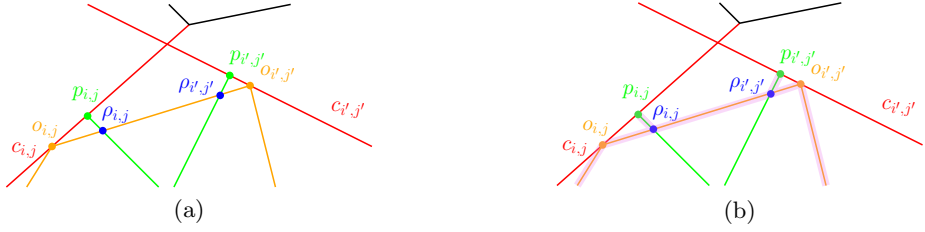

\centering
%\hspace*{.01\textwidth}
\comic{.4\textwidth}{figures/reroute-opt-a}{(a)}\hfill
\comic{.4\textwidth}{figures/reroute-opt-b}{(b)}
%\hspace*{.01\textwidth}
  \caption{\small  The cuts $c_{i,j}$ and $c_{i',j'}$ are shown in red; $o_{i,j}$, $o_{i',j'}$ and $\mbox{OPT}(S,P,s)$ are shown in orange; $p_{i,j}$, $p_{i',j'}$, $g_{i,j}$, and $g_{i',j'}$ are shown in green; $\rho_{i,j}$ and $\rho_{i',j'}$ are shown in blue; and $T$ is shown in pink. A part of $P$'s boundary is shown in black. }
  \label{fig:reroute-opt}
\end{figure}

\begin{lemma}\label{le:compare-chs}
$\|\mbox{\rm CH}_{P}(\mathcal{P}_{\mathcal{C''}})\|\leq \|\mbox{\rm CH}_{P}(\mbox{\rm OPT},\mathcal{P}_{\mathcal{C''}})\|$.
\end{lemma}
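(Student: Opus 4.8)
The plan is to recognize that Lemma~\ref{le:compare-chs} is a pure monotonicity statement about relative convex hulls: the two hulls are taken of \emph{nested} point sets, and the relative convex hull is monotone both as a set operator and in boundary length. First I would record the set inclusion. By definition (step g), $\mbox{CH}_{P}(\mbox{OPT},\mathcal{P}_{\mathcal{C''}})$ is the relative convex hull of the point set $\{o_{i,j}\}\cup\mathcal{P}_{\mathcal{C''}}$, which contains $\mathcal{P}_{\mathcal{C''}}$. Since $\mbox{CH}_{P}(\mbox{OPT},\mathcal{P}_{\mathcal{C''}})$ is relatively convex and contains $\mathcal{P}_{\mathcal{C''}}$, while $\mbox{CH}_{P}(\mathcal{P}_{\mathcal{C''}})$ is by definition the \emph{smallest} relatively convex set containing $\mathcal{P}_{\mathcal{C''}}$, we obtain the containment $\mbox{CH}_{P}(\mathcal{P}_{\mathcal{C''}})\subseteq\mbox{CH}_{P}(\mbox{OPT},\mathcal{P}_{\mathcal{C''}})$. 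Both objects are relatively convex regions, i.e.\ closed geodesically convex subregions of the simple polygon $P$.

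It then remains to prove the general geometric fact that for two nested relatively convex regions $K_{1}\subseteq K_{2}$ inside $P$ one has $\|\partial K_{1}\|\le\|\partial K_{2}\|$; applying it with $K_{1}=\mbox{CH}_{P}(\mathcal{P}_{\mathcal{C''}})$ and $K_{2}=\mbox{CH}_{P}(\mbox{OPT},\mathcal{P}_{\mathcal{C''}})$ finishes the lemma. This is the relatively convex analogue of the classical fact that a convex body contained in another convex body has no larger perimeter, and it is the only nontrivial ingredient.

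To establish the perimeter-monotonicity fact, I would use the nearest-point projection onto the inner region in the geodesic metric $d_{P}$. A simple polygon equipped with $d_{P}$ is a $\mathrm{CAT}(0)$ geodesic space (geodesics between points of $P$ are unique), so for the closed, geodesically convex set $K_{1}$ the map $\pi\colon P\to K_{1}$ sending $x$ to its unique $d_{P}$-closest point in $K_{1}$ is well-defined and non-expansive, i.e.\ $d_{P}(\pi(x),\pi(y))\le d_{P}(x,y)$. Restricting $\pi$ to the closed boundary curve $\partial K_{2}$ and using $K_{1}\subseteq K_{2}$, the image $\pi(\partial K_{2})$ sweeps all of $\partial K_{1}$ (points already on $\partial K_{1}$ are fixed by $\pi$). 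Since a non-expansive map cannot increase the length of a rectifiable curve, $\|\partial K_{1}\|\le\mathrm{length}\big(\pi(\partial K_{2})\big)\le\|\partial K_{2}\|$, as required.

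I expect the main obstacle to be exactly this last step: making the projection argument rigorous in the polygonal (geodesic) setting rather than the Euclidean one---in particular justifying that $\pi$ is single-valued and non-expansive (via the $\mathrm{CAT}(0)$ property of $(P,d_{P})$, or directly via uniqueness of geodesics in simple polygons) and that traversing $\partial K_{2}$ once yields a closed image curve covering $\partial K_{1}$. As an alternative that avoids invoking $\mathrm{CAT}(0)$ machinery, I would add the defining points one at a time: each added point either leaves the current relatively convex hull unchanged, or it lies outside, in which case the new boundary replaces a bulging sub-arc of $\partial\big(\mbox{CH}_{P}(\cdot)\big)$ by a geodesic detour through the new point. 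Since that detour is an \emph{outer} relatively convex arc nested around the removed \emph{inner} arc with the same endpoints, it is at least as long, so the perimeter never decreases; this reduces the whole claim to the elementary triangle-type inequality that an outer geodesic chain around a relatively convex arc is no shorter than the arc itself.
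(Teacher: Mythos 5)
Your proof is correct and follows essentially the same route as the paper: the paper's entire argument is the observation that $\mathcal{P}_{\mathcal{C''}}\subseteq\mbox{OPT}\cup\mathcal{P}_{\mathcal{C''}}$, from which it declares the claim to follow ``trivially,'' i.e.\ exactly your chain of hull monotonicity under set inclusion plus perimeter monotonicity of nested relatively convex regions. The only difference is that you rigorously justify the perimeter-monotonicity step (via the non-expansive geodesic projection, or the incremental point-insertion argument), which the paper leaves implicit as a standard fact.
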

\begin{proof}
We have $\mathcal{P}_{\mathcal{C''}}\subseteq\mbox{OPT}\cup\mathcal{P}_{\mathcal{C''}}$, hence, the claim follows trivially.
\end{proof}

\begin{lemma}\label{le:visit-PC''}
All points in $\mathcal{P}_{\mathcal{C''}}$ lie on their relative convex hull\/
$\mbox{\rm CH}_{P}(\mathcal{P}_{\mathcal{C''}})$.
% visits all points in $\mathcal{P}_{\mathcal{C''}}$, i.e., all points in $\mathcal{P}_{\mathcal{C''}}$ lie on their relative convex hull.
\end{lemma}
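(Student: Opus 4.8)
The plan is to prove the statement pointwise: I fix an arbitrary $p_{i,j}\in\mathcal{P}_{\mathcal{C''}}$, sitting on its cut $c_{i,j}$, and show it cannot lie in the interior of $\mbox{CH}_{P}(\mathcal{P}_{\mathcal{C''}})$. The guiding idea is that the cut $c_{i,j}$ is a straight chord splitting $P$ into $P_s(c_{i,j})$ (the side holding $s$) and the complementary side, and that, by Lemma~\ref{le:indep}, $c_{i,j}$ is crossed by no geodesic of $\mathcal{G}_{\mathcal{C''}}$ except $g_{i,j}$ itself. If I can confine the whole set $\mathcal{P}_{\mathcal{C''}}$, and hence its relative convex hull, to the closed side $\overline{P_s(c_{i,j})}$, then $p_{i,j}\in c_{i,j}$ must lie on the hull boundary.

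First I would confine the points. Each $p_{i',j'}\in\mathcal{P}_{\mathcal{C''}}$ is the endpoint of its geodesic $g_{i',j'}$, which emanates from $s\in P_s(c_{i,j})$. For $(i',j')\neq(i,j)$, independence (Lemma~\ref{le:indep}) prevents $g_{i',j'}$ from crossing $c_{i,j}$, so this connected curve never leaves $\overline{P_s(c_{i,j})}$ and therefore $p_{i',j'}\in\overline{P_s(c_{i,j})}$; moreover $g_{i,j}$ itself stops on $c_{i,j}$, and $p_{i,j}\in c_{i,j}\subseteq\partial\big(P_s(c_{i,j})\big)$. Thus $\mathcal{P}_{\mathcal{C''}}\subseteq\overline{P_s(c_{i,j})}$.

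Next I would show that $\overline{P_s(c_{i,j})}$ is geodesically (relatively) convex in $P$, so that $\mbox{CH}_{P}(\mathcal{P}_{\mathcal{C''}})\subseteq\overline{P_s(c_{i,j})}$. This is the step I expect to be the main obstacle, although it reduces to a taut-string argument: were the $P$-geodesic between two points of $\overline{P_s(c_{i,j})}$ to enter the complementary side, it would meet the straight chord $c_{i,j}$ in two points $a,b$, and replacing its far-side arc between $a$ and $b$ by the segment $\overline{ab}\subseteq c_{i,j}\subseteq P$ would yield a no-longer curve that stays on the $s$-side, contradicting shortestness. Hence no such geodesic leaves $\overline{P_s(c_{i,j})}$, which is therefore geodesically convex and must contain the relative convex hull of any subset of itself.

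Finally I would conclude. Since $p_{i,j}$ lies on $c_{i,j}$, every neighbourhood of $p_{i,j}$ within $P$ meets the complementary side and hence the complement of $\overline{P_s(c_{i,j})}$; because $\mbox{CH}_{P}(\mathcal{P}_{\mathcal{C''}})\subseteq\overline{P_s(c_{i,j})}$, no neighbourhood of $p_{i,j}$ is contained in the hull, so $p_{i,j}$ is on the hull boundary rather than in its interior. As $p_{i,j}$ was arbitrary, all of $\mathcal{P}_{\mathcal{C''}}$ lies on $\mbox{CH}_{P}(\mathcal{P}_{\mathcal{C''}})$. The only case needing a separate word is when $p_{i,j}$ is an endpoint of $c_{i,j}$ lying on $\partial(P)$; there I would simply invoke $\mbox{CH}_{P}(\cdot)\subseteq P$, which already forces any hull point on $\partial(P)$ onto the hull boundary.
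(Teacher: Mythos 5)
Your proof is correct and takes essentially the same route as the paper's: both use Lemma~\ref{le:indep} to conclude that no geodesic of $\mathcal{G}_{\mathcal{C''}}$ other than $g_{i,j}$ meets $c_{i,j}$, confine $\mbox{CH}_{P}(\mathcal{P}_{\mathcal{C''}})$ to the side $P_s(c_{i,j})$ of the cut, and conclude that $p_{i,j}$, lying on the chord $c_{i,j}$, must therefore be on the hull. Your write-up merely makes explicit two steps the paper leaves implicit, namely the confinement of every point of $\mathcal{P}_{\mathcal{C''}}$ to $\overline{P_s(c_{i,j})}$ via the geodesics from $s$, and the geodesic (relative) convexity of that subpolygon via the taut-string argument.
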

\begin{proof}
Assume there is a point $p_{i,j}\in\mathcal{P}_{\mathcal{C''}}, p_{i,j}\notin\mbox{CH}_{P}(\mathcal{P}_{\mathcal{C''}})$. By Lemma~\ref{le:indep}, there exists no $g_c\in\mathcal{G}_{\mathcal{C''}}$ that intersects $c_{i,j}$. The cut $c_{i,j}$ connects two points $x,y\in\partial(P)$, and we have $\mbox{CH}_{P}(\mathcal{P}_{\mathcal{C''}})\subseteq P_s(c_{i,j})$ (that is, $\mbox{CH}_{P}(\mathcal{P}_{\mathcal{C''}})$ does not cross $c_{i,j}$). Thus, $p_{i,j}$ must lie on $\mbox{CH}_{P}(\mathcal{P}_{\mathcal{C''}})$; a contradiction. See Figure~\ref{fig:all-on-CH} for an example.
\end{proof}

\begin{figure}[]
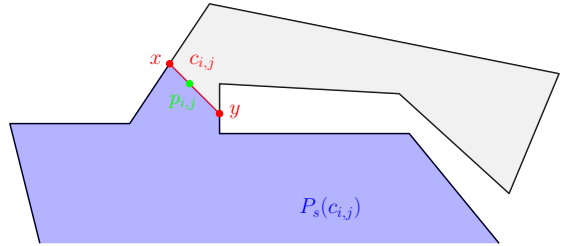

\centering
%\hspace*{.01\textwidth}
\comicII{.6\textwidth}{figures/all-on-CH-2}
%\hspace*{.01\textwidth}
  \caption{\small For a point $p_{i,j}\in\mathcal{P}_{\mathcal{C''}}$, no geodesic from $\mathcal{G}_{\mathcal{C''}}$ intersects $c_{i,j}$. The cut $c_{i,j}$ connects two points $x,y\in\partial(P)$. Because $\mbox{CH}_{P}(\mathcal{P}_{\mathcal{C''}})\subseteq P_s(c_{i,j})$ (with $P_s(c_{i,j})$ shown in blue), $p_{i,j}$ must lie on~$\mbox{CH}_{P}(\mathcal{P}_{\mathcal{C''}})$.}
  \label{fig:all-on-CH}
\end{figure}

\begin{lemma}\label{le:visit-all-cuts-in-C}
$\mbox{\rm CH}_{P}(\mathcal{P}_{\mathcal{C''}})$ visits all cuts in $\mathcal{C}$. %, cuts in $\mathcal{C}\setminus\mathcal{C''}$ are intersected.
\end{lemma}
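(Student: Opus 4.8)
The plan is to reduce the statement to the essential cuts and then combine Lemma~\ref{le:indep} with Lemma~\ref{le:visit-PC''}. First I would observe that, because domination is a partial order on cuts whose maximal elements are precisely the essential ones, every cut $c\in\mathcal{C}$ is either essential itself (so $c\in\mathcal{C'}$) or dominated by some essential cut $c^{*}\in\mathcal{C'}$; taking a maximal element of $\mathcal{C}$ above $c$ under domination yields such a $c^{*}$, and then $P_{s}(c)\subseteq P_{s}(c^{*})$, so any path reaching $c^{*}$ from $s$ must cross $c$. It therefore suffices to prove two things: (i) $\mbox{CH}_{P}(\mathcal{P}_{\mathcal{C''}})$ visits every essential cut in $\mathcal{C'}$, and (ii) visiting an essential cut $c^{*}$ that dominates $c$ forces the hull to visit $c$ as well.

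For (i) I would split the essential cuts into those kept in $\mathcal{C''}$ and those pruned in step~d. If $c\in\mathcal{C''}$, then the geodesic ending on $c$ supplies a point $p_{i,j}\in\mathcal{P}_{\mathcal{C''}}$ lying on $c$, and by Lemma~\ref{le:visit-PC''} this point lies on the hull, so the hull touches $c$. If $c\in\mathcal{C'}\setminus\mathcal{C''}$, then by Lemma~\ref{le:indep} exactly one geodesic $g\in\mathcal{G}_{\mathcal{C''}}$ visits $c$, and by independence no other geodesic crosses $c$. Hence the endpoint of $g$ lies on the far side of $c$, in $P\setminus P_{s}(c)$, while the endpoints of all remaining geodesics—which start at $s\in P_{s}(c)$ and never cross $c$—lie in $P_{s}(c)$. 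Thus $\mathcal{P}_{\mathcal{C''}}$ has points on both sides of the chord $c$, and since its relative convex hull is relatively convex, it must cross $c$.

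Step (ii) would transport the same both-sides argument through domination: the hull carries a point of $\mathcal{P}_{\mathcal{C''}}$ on $c^{*}$, which sits in $P\setminus P_{s}(c)$ because $c^{*}$ lies beyond $c$; paired with a hull vertex in $P_{s}(c)$, the chord $c$ is again straddled, so the hull visits $c$. Combining (i) and (ii) then accounts for every cut of $\mathcal{C}$, which is the claim.

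The main obstacle I expect is exactly this both-sides requirement, namely guaranteeing that $\mathcal{P}_{\mathcal{C''}}$ contains a vertex on the near side $P_{s}(c)$ of each cut $c\notin\mathcal{C''}$. For essential cuts this is delivered cleanly by the independence of $\mathcal{G}_{\mathcal{C''}}$ (Lemma~\ref{le:indep}), but for a dominated cut $c$ lying close to $s$ it is possible that every surviving geodesic already crosses $c$, so no kept endpoint remains on $s$'s side; the extreme of this is the degenerate case $|\mathcal{P}_{\mathcal{C''}}|=1$, in which the geometric hull collapses to a single point beyond $c$. Here I would fall back on the graph $G$: the zero-cost edge from that vertex $p_{i,j}$ to $\hat{c}$ (added in step~3 precisely because the geodesic $g_{i,j}$ crosses $c$) already records that $c$ is reached, so the cut is ``visited'' in the sense required by the group Steiner argument of step~i, even when the planar hull fails to touch it.
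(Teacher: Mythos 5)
Your proposal follows the paper's own decomposition: cuts in $\mathcal{C''}$ are handled via Lemma~\ref{le:visit-PC''}, cuts in $\mathcal{C}\setminus\mathcal{C'}$ via domination by essential cuts, and the crux is the pruned essential cuts $c\in\mathcal{C'}\setminus\mathcal{C''}$. There you genuinely diverge from the paper. The paper argues that the surviving geodesic crossing $c$ ends at a point $p_{i,j}\in\mathcal{P}_{\mathcal{C''}}$ beyond $c$, that any tour visiting both $s$ and $p_{i,j}$ must cross $c$, and that $\mbox{CH}_{P}(\mathcal{P}_{\mathcal{C''}})$ ``is such a tour.'' That last assertion is unsupported: Lemma~\ref{le:visit-PC''} places the points of $\mathcal{P}_{\mathcal{C''}}$ on the hull but says nothing about $s$, and step~j of the algorithm explicitly concedes that $s$ need not lie on $\mbox{CH}_{P}(\mathcal{P}_{\mathcal{C''}})$. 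Your replacement---exactly one surviving geodesic crosses $c$ (both parts of Lemma~\ref{le:indep} together), so the endpoints of all \emph{other} surviving geodesics stay in $P_s(c)$, giving points of $\mathcal{P}_{\mathcal{C''}}$ on both sides of the chord, which the connected relative hull must therefore cross---avoids any appeal to $s$ and is in this respect more careful than the published proof.

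The obstacle you flag at the end is real, and it is a gap in the lemma itself, not merely in your argument. Take $\mathcal{C}=\{c_1,c_2\}$ nested along a corridor from $s$, with $\ell(g_1)\geq\ell(g_2)$ and $g_1$ crossing $c_2$; this can occur whether or not $c_1$ dominates $c_2$, so the failure is not confined to dominated cuts as your wording suggests---it hits essential cuts in $\mathcal{C'}\setminus\mathcal{C''}$ as well whenever $|\mathcal{G}_{\mathcal{C''}}|=1$. In either case $\mathcal{C''}=\{c_1\}$, $\mathcal{P}_{\mathcal{C''}}=\{p_1\}$, and the ``hull'' is the single point $p_1$, which lies strictly beyond $c_2$, so $\mbox{CH}_{P}(\mathcal{P}_{\mathcal{C''}})$ does not visit $c_2\in\mathcal{C}$. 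Both your both-sides argument and the paper's tour-through-$s$ argument break at exactly this point: no point of $\mathcal{P}_{\mathcal{C''}}$ remains on the near side of $c_2$. Your fallback---the zero-cost edges $\{p_{i,j},\hat{c}_{i',j'}\}$ of $G$ record every cut a surviving geodesic crosses, so the corresponding group is still hit at no extra cost---is indeed the right way to rescue Lemma~\ref{le:visit-gammais} and the chain of inequalities in step~k, hence Theorem~\ref{th:appx-dis}. But be explicit that this repairs the theorem, not the statement under review: as a claim about the geometric hull, Lemma~\ref{le:visit-all-cuts-in-C} must either exclude such degenerate configurations or weaken ``visits'' to the graph-level notion (a cut counts as visited if some $p_{i,j}$ on the hull has a zero-cost edge to that cut's vertex $\hat{c}_{i,j}$).
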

\begin{proof}
We have $\mathcal{C}=\mathcal{C''}\cup\{\mathcal{C'}\setminus\mathcal{C''}\}\cup\{\mathcal{C}\setminus\mathcal{C'}\}$. Cuts in $\{\mathcal{C}\setminus\mathcal{C'}\}$ are dominated by cuts in $\mathcal{C'}$. Thus, any tour visiting all cuts in $\mathcal{C'}$ must visit all cuts in $\{\mathcal{C}\setminus\mathcal{C'}\}$. Cuts in $\mathcal{C''}$ are visited by Lemma~\ref{le:visit-PC''}. %Hence, we need to show that all cuts in $\{\mathcal{C'}\setminus\mathcal{C''}\}$ are visited. 

Assume that there is a cut $c\in\{\mathcal{C'}\setminus\mathcal{C''}\}$ not visited by $\mbox{CH}_{P}(\mathcal{P}_{\mathcal{C''}})$.  Then the cut $c$ is not in $\mathcal{C''}$ (we filtered $g_c$ out in step d), hence, there exists a geodesic $g_{c_{i,j}}\in\mathcal{G}_{\mathcal{C''}}$ with $\ell(g_{c_{i,j}})\geq \ell(g_c)$ that intersects $c$; $g_{c_{i,j}}$ visits $c_{i,j}$ in the point $p_{i,j}$. Thus, any tour that visits both $s$ and $p_{i,j}$ must intersect $c$. By Lemma~\ref{le:visit-PC''}, $\mbox{CH}_{P}(\mathcal{P}_{\mathcal{C''}})$ is such a tour; a contradiction.
\end{proof}

%While we visit the cuts in $\mathcal{C''}$ in points of the type $p_{i,j}$, all other cuts are intersected by $\mbox{CH}_{P}(\mathcal{P}_{\mathcal{C''}})$. This does not have to happen in points where geodesics in $\mathcal{G}_{\mathcal{C''}}$ intersect these cuts, that is, in points of the type $p_{h,i,j}$. However, the points $p_{h,i,j}$ in $G$ are reached via edge cost zero---visiting the $k$-visibility region of $s_h$ does not incur any further cost when visiting $p_{i,j}$ using $g_{i,j}$. This property holds for the points on cuts in $\mathcal{C}\setminus\mathcal{C''}$ visited by $\mbox{CH}_{P}(\mathcal{P}_{\mathcal{C''}})$.

\begin{lemma}\label{le:visit-gammais}
$\mbox{\rm CH}_{P}(\mathcal{P}_{\mathcal{C''}})$ visits one point per $\gamma_i$, except for $\gamma_0$.
\end{lemma}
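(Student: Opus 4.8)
Here is how I would approach the proof of Lemma~\ref{le:visit-gammais}.

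The plan is to show, for each color $i\in\{1,\dots,|S|\}$ separately, that $\mbox{CH}_{P}(\mathcal{P}_{\mathcal{C''}})$ crosses at least one cut of color $i$, and that crossing such a cut forces the hull to contain a vertex of $\gamma_i=\bigcup_j p_{i,j}\cup\bigcup_j\hat{c}_{i,j}$ in $G$. First I would invoke feasibility of OPT: since OPT sees every $s_i$, it must contain a point inside some connected component $\kVR^{j_0}(s_i)$; because $c_{i,j_0}$ is the cut separating $s$ from the interior of that component, any $s$-rooted route reaching the component crosses $c_{i,j_0}$, so $c_{i,j_0}\in\mathcal{C}$. Thus $\mathcal{C}$ contains a cut of every color. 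By Lemma~\ref{le:visit-all-cuts-in-C}, $\mbox{CH}_{P}(\mathcal{P}_{\mathcal{C''}})$ crosses all cuts in $\mathcal{C}$, and in particular it crosses $c_{i,j_0}$.

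Next I would translate ``the hull crosses $c_{i,j_0}$'' into ``the hull visits a vertex of $\gamma_i$ in $G$'', splitting into two cases. If $c_{i,j_0}\in\mathcal{C''}$, then $p_{i,j_0}\in\mathcal{P}_{\mathcal{C''}}$, which lies on the hull by Lemma~\ref{le:visit-PC''}; as $p_{i,j_0}\in\gamma_i$, we are done. If $c_{i,j_0}\notin\mathcal{C''}$, then it was filtered out in step~d, so it is dominated by an essential cut $c\in\mathcal{C'}$. By Lemma~\ref{le:indep}(II), $c$ is crossed by some geodesic $g'\in\mathcal{G}_{\mathcal{C''}}$, whose endpoint $p'\in\mathcal{P}_{\mathcal{C''}}$ lies on the hull (again Lemma~\ref{le:visit-PC''}). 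Since $c$ dominates $c_{i,j_0}$, i.e.\ $P_s(c_{i,j_0})\subset P_s(c)$, the cut $c$ lies on the far side of $c_{i,j_0}$ as seen from $s$, so every $s$-rooted path that reaches $c$---in particular $g'$---also crosses $c_{i,j_0}$. Because $g'$ crosses $c_{i,j_0}$, the construction in step~3 places the $0$-cost edge $\{p',\hat{c}_{i,j_0}\}$ into $G$, and the hull reaches $\hat{c}_{i,j_0}\in\gamma_i$ at no additional cost. In either case the hull visits $\gamma_i$; since $i$ is arbitrary, it visits one point per $\gamma_i$ (with $\gamma_0=s$ handled separately in step~j).

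The hard part will be the second case. The color-$i$ cut that OPT actually crosses need not be essential, so one cannot appeal to Lemma~\ref{le:indep}(II) for it directly. The argument hinges on the geometric fact that crossing a dominating cut forces crossing every cut it dominates (the same reasoning underlying Lemma~\ref{le:visit-all-cuts-in-C}), which lets the \emph{independent} geodesic selected for the essential dominating cut ``carry'' the required color-$i$ cut crossing and hence the $0$-cost connection to $\hat{c}_{i,j_0}$. I would also be careful that ``visiting $\gamma_i$'' is read in $G$---using the $0$-cost edges of step~3---rather than purely geometrically, so that the relative convex hull genuinely corresponds to a feasible group-Steiner solution and can serve as the upper bound on $\mbox{OPT}_G(S,P,s)$ needed in step~i.
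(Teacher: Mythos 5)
Your proposal is correct and its skeleton is exactly the paper's proof: the paper's entire argument is your first paragraph---feasibility of OPT ensures $\mathcal{C}$ contains a cut of every color, and Lemma~\ref{le:visit-all-cuts-in-C} then makes $\mbox{CH}_{P}(\mathcal{P}_{\mathcal{C''}})$ visit a cut of each color---after which the paper concludes directly that one point per $\gamma_i$ is visited. Your second paragraph supplies a translation step the paper leaves implicit: since $\gamma_i$ consists of graph vertices ($p_{i,j}$ and $\hat{c}_{i,j}$), geometrically crossing a color-$i$ cut does not by itself place a vertex of $\gamma_i$ on the tour, and your case analysis (a color-$i$ cut in $\mathcal{C''}$ puts $p_{i,j_0}\in\mathcal{P}_{\mathcal{C''}}$ on the hull via Lemma~\ref{le:visit-PC''}; otherwise a geodesic of $\mathcal{G}_{\mathcal{C''}}$ crossing a color-$i$ cut yields the $0$-cost edge to $\hat{c}_{i,j_0}$) is precisely what justifies reading the hull as a feasible group-Steiner solution in the inequality chain of step~k. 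One slip in that case analysis: ``$c_{i,j_0}\notin\mathcal{C''}$'' covers two mutually exclusive situations that your sentence conflates. If $c_{i,j_0}\in\mathcal{C'}\setminus\mathcal{C''}$, it is essential, hence by definition \emph{not} dominated by any other cut---it was removed in step~d because a longer geodesic intersects it, and here Lemma~\ref{le:indep}(II) applied to $c_{i,j_0}$ itself gives the $0$-cost edge with no domination argument needed. If instead $c_{i,j_0}\in\mathcal{C}\setminus\mathcal{C'}$, it was never subject to step~d (it never entered $\mathcal{C''}$) but it is dominated by an essential cut, and your domination-forces-crossing argument applies. Both sub-cases are handled by tools you already invoke, so this is a wording error rather than a missing idea, but as literally written the premise of your second case is false in the first sub-case.
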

\begin{proof}
Because OPT($S,P,s$) is feasible, the set $\mathcal{C}$ does include at least one cut colored in $s_i, \forall i$.  By Lemma~\ref{le:visit-all-cuts-in-C}, $\mbox{CH}_{P}(\mathcal{P}_{\mathcal{C''}})$ visits all cuts in $\mathcal{C}$. Hence, it visits at least one point per~$\gamma_i$.
\end{proof}
This concludes the proof of Theorem~\ref{th:appx-dis}.

\iffalse
\begin{proof}\todo{possible problem: the geodesics in $\mathcal{G}_{\mathcal{C''}}$ visit one cut per color $s_i$, but none of the points is on the convex hull.}
The set $\mathcal{P}_{\mathcal{C''}}$ consists of all the points where geodesics in $\mathcal{G}_{\mathcal{C''}}$ intersect the cuts in $\mathcal{C''}$: points of the type $p_{i,j}$. 
Because OPT($S,P,s$) is feasible, the set $\mathcal{C}$ does include at least one cut colored in $s_i, \forall i$.

If $\mathcal{P}_{\mathcal{C''}}$ does not visit one point of $\gamma_{\eta}$, we can identify three possible failures:
\begin{enumerate}
\item The set of essential cuts $\mathcal{C'}\subseteq\mathcal{C}$ does not include a cut colored in $s_{\eta}$.
\item Set $\mathcal{C'}$ does include a cut colored in $s_{\eta}$, but $\mathcal{C''}$ does not (however, a cut colored in $s_{\eta}$ is touched by a geodesic in $\mathcal{G}_{\mathcal{C''}}$). 
\end{enumerate}
In case 1, all cuts colored in $s_{\eta}$ are dominated by essential cuts in other colors. Any route visiting all essential cuts\todo{BUT: we do not visit all essential cuts, only those to which an independent geodesic leads}, in particular, $\mathcal{P}_{\mathcal{C''}}$, will also visit the dominated cuts.

\end{proof}
\fi

\iffalse
\begin{lemma}
In a simple polygon $P$, $\kVR(x)$ can have $O(n)$ windows (i.e., edges that are not edges of $P$). Only for one of these windows, $W'$, the geodesic from $s$ to that window does not run through the interior of $\kVR(x)$. The geodesics from $s$ to all other windows intersects $W'$.
\end{lemma}
\fi

\section{Conclusion}\label{sec:concl}

We proved that even in simple polygons the $k$-transmitter watchman route problem for $S\subset P$ cannot be approximated to within a logarithmic factor (unless P=NP) --- both the variant with a given starting point and the floating watchman route. Moreover, we provided an approximation algorithm for $k$-TrWRP($S,P,s$), that is, the variant where we need to see a discrete set of points $S\subset P$ with a given starting point. The approximation ratio of our algorithm is $O(\log^2(|S|\cdot n) \log\log (|S|\cdot n) \log|S|)$.

Obvious open questions concern approximation algorithms for the other versions of the WRP for $k$-transmitters: $k$-TrWRP($P,P,s$),  $k$-TrWRP($P,P$), and $k$-TrWRP($S,P$). Moreover, for $0$-transmitters (``normal'' guards), we have a very clear structure: any watchman route must visit all non-dominated extensions of edges incident to a reflex vertex. Any structural analogue for $k$-transmitters ($k\geq2$) would be of great interest.  %for both $k$-TrWRP($P,P,s$) and  $k$-TrWRP($P,P$).

\bibliographystyle{is-unsrt}
\bibliography{lit}

\end{document}